\declaretheoremstyle[
spaceabove=6pt, spacebelow=6pt,
headfont=\normalfont\bfseries,
notefont=\mdseries, notebraces={(}{)},
bodyfont=\normalfont,
postheadspace=1em,
qed=$\blacksquare$
]{examplestyle}
\declaretheoremstyle[
spaceabove=6pt, spacebelow=6pt,
headfont=\normalfont\bfseries,
notefont=\mdseries, notebraces={(}{)},
bodyfont=\itshape,
postheadspace=1em
]{theorem}
\declaretheoremstyle[
spaceabove=6pt, spacebelow=6pt,
headfont=\normalfont\bfseries,
notefont=\mdseries, notebraces={(}{)},
bodyfont=\normalfont,
postheadspace=1em
]{assumption}
\declaretheoremstyle[
spaceabove=4pt, spacebelow=4pt,
headfont=\itshape\bfseries,
notefont=\mdseries, notebraces={(}{)},
bodyfont=\itshape,
postheadspace=0.2em,
qed=\qedsymbol
]{remark}
\declaretheorem[style=theorem]{theorem}
\declaretheorem[style=theorem, numbered=no, name=Theorem]{theorem*}
\declaretheorem[style=theorem, name=Proposition]{proposition}
\declaretheorem[numbered=no,style=definition,name=Question]{question*}  
\declaretheorem[style=definition,name=Definition, numbered=no]{definition*}  
\declaretheorem[style=examplestyle]{example}
\renewcommand\thmcontinues[1]{continued}
\newtheorem{thm}{Assumption}
\newcommand{\norm}[1]{\left\Vert #1\right\Vert }
\newcommand{\pd}[2]{\frac{\partial#1}{\partial#2}}
\newcommand{\dd}[2]{\frac{\mathrm{d}#1}{\mathrm{d}#2}}
\newcommand{\expt}{\text{\normalfont E}}
\newcommand{\var}{\text{\normalfont Var}}
\newcommand{\corr}{\text{\normalfont corr}}
\newcommand{\convd}{\overset{d}{\to}}
\newcounter{steps}
\DeclareMathOperator*{\argmin}{arg\,min}
\newcommand\independent{\protect\mathpalette{\protect\independenT}{\perp}}
\def\independenT#1#2{\mathrel{\rlap{$#1#2$}\mkern2mu{#1#2}}}
\title{Production Function Estimation without Invertibility: Imperfectly Competitive Environments and Demand Shocks\thanks{We thank Dan Ackerberg and Steve Berry for helpful discussions and comments and Alexander Whitefield, Vivian Zhang, and Wanxi Zhou for excellent research assistance.}}
\author{Ulrich Doraszelski\\
University of Pennsylvania\thanks{%
Wharton School, Email: doraszelski@wharton.upenn.edu} \and Lixiong Li\\
Johns Hopkins University\thanks{%
Department of Economics, Email: lixiong.li@jhu.edu}}
\date{\today}
\begin{document}
\maketitle

\begin{abstract}
We advance the proxy variable approach to production function estimation. We show that the invertibility assumption at its heart is testable. We characterize what goes wrong if invertibility fails and what can still be done. We show that rethinking how the estimation procedure is implemented either eliminates or mitigates the bias that arises if invertibility fails. In particular, a simple change to the first step of the estimation procedure provides a first-order bias correction for the GMM estimator in the second step. Furthermore, a modification of the moment condition in the second step ensures Neyman orthogonality and enhances efficiency and robustness by rendering the asymptotic distribution of the GMM estimator invariant to estimation noise from the first step.
 \end{abstract}

\section{Introduction\label{sec:introduction}}

Production functions are one of the oldest concepts in economics \cite{VONT:42,WICK:94,COBB:28}. As a description of the relationship between inputs and output, they are of interest in and of themselves and also a vehicle for measuring productivity and technical change \cite{SOLO:57}. More recently, production functions have become a key input into estimating markups and markdowns in order to assess firms' market power \cite{HALL:88,DELO:12}. The production approach to markup estimation recognizes that, given an estimate of the production function, the markup can be recovered from the firm's cost minimization problem. To estimate the production function, this large and rapidly growing literature relies on the procedure initially developed by \citeasnoun{OLLE:96} and further advanced by \citeasnoun{LEVI:03} and \citeasnoun{ACKE:15} (henceforth OP, LP, and ACF).

The OP/LP/ACF procedure starts from the observation that there is an endogeneity problem in estimating production functions. As first pointed out by \citeasnoun{MARS:44}, this problem arises because the decisions that the firm makes regarding its inputs depend on its productivity, which is unobserved by the econometrician. To resolve this problem, OP turn it on its head and use the firm's decisions that are observed by the econometrician to infer its productivity. The resulting proxy variable approach to production function estimation has quickly become standard practice.

The inversion from observables to productivity at the heart of the OP/LP/ACF procedure requires that firms with the same productivity make the same choices. Yet, there is no reason to believe this is the case, e.g., if these firms face different demands in the output market. The current paper therefore revisits the OP/LP/ACF procedure if invertibility fails. It characterizes what goes wrong in the OP/LP/ACF procedure and what can still be done.

The point that the OP/LP/ACF procedure cannot accommodate unobserved demand heterogeneity has been made before. \citeasnoun{FOST:08} put it as follows:
\begin{quote}
\ldots\ idiosyncratic demand shocks make the proxies functions of both
technology and demand shocks, thereby inducing a possible omitted variable
bias. Put simply, proxy methods require a one-to-one mapping between
plant-level productivity and the observables used to proxy for productivity.
This mapping breaks down if other unobservable plant-level factors besides
productivity drive changes in the observable proxy. (p. 403)
\end{quote}
Indeed, to establish invertibility, OP rule out that firms face different demands and abstract from competition between firms.\footnote{OP assume that any profitability differences across firms are due to differences in their capital stocks and productivities (p. 1273), thereby ruling out that firms face different demands. Limiting the state variables in the firm's investment policy to its own capital stock and productivity moreover abstracts from competition between firms \citeaffixed{PAKE:94B}{see also Lemma 3 and Theorem 1 in}.} LP assume a perfectly competitive industry where firms act as price takers and thus face the same horizontal demand curve (p. 322 and Appendix A). Based on their results, ACF start from scalar unobservable and strict monotonicity assumptions, and almost all subsequent literature simply imposes invertibility as a high-level assumption. For this reason, it may not have fully appreciated the demanding nature of the invertibility assumption.

Unobserved demand heterogeneity is likely to be empirically important. The large literatures on demand estimation and productivity analysis highlight the considerable heterogeneity in demand that remains even after controlling for detailed product attributes \cite{BERR:95} or honing in on (nearly) homogenous products \cite{FOST:08}. The problem is compounded by the fact that, in imperfectly competitive environments, a firm's decisions in equilibrium depend on its own productivity as well as on the productivities of its rivals. Hence, imperfectly competitive environments require jointly inverting the decisions of all firms for the productivities of all firms. This many-to-many inverse may not exist \cite{BION:22} or it may be too high-dimensional to be practical \cite{ACKE:24}. Moreover, a firm's rivals are partially or completely unobserved in typical datasets used for production function estimation. The decisions of its rivals have thus to be thought of as shocks to the demand the focal firm faces.

Invertibility can fail for reasons other than unobserved demand heterogeneity. Changes in firm conduct due to mergers and acquisitions or switches from competition to collusion are best thought of as changes to the ownership matrix. Invertibility can fail if these changes in firm conduct are partially or completely unobserved by the econometrician. Finally, invertibility can fail if there is unobserved variation across firms or time in input prices, investment opportunities, or financial constraints.


In light of the demanding nature of the invertibility assumption, this paper makes five contributions. First, we propose tests for invertibility. Our tests exploit that if invertibility fails, then productivity becomes a hidden state in a Markov model. The literature on Kalman and particle filters in dynamic systems shows that the best guess for the hidden state uses the entire history of the observables as opposed to just their current value. Invertibility can therefore be tested by including lags of the observables in the regression in the first step of the OP/LP/ACF procedure. Implementing our tests on an unbalanced panel of Spanish manufacturing firms and a balanced panel of US manufacturing industries, we strongly reject invertibility.

We therefore characterize the consequences of a failure of invertibility for the OP/LP/ACF procedure. The first step of the OP/LP/ACF procedure regresses output on observables. If invertibility fails, then the prediction of output contains an error. Because the prediction is used to control for lagged productivity in the GMM estimation in the second step, the lagged prediction error enters into the conditional moment. Building on \citeasnoun{DORA:21}, we show that this invalidates capital as an instrument and results in biased estimates. 

Turning from what goes wrong if invertibility fails to what can still be done, our second contribution is to provide a necessary and sufficient condition for the moment condition in the second step of the OP/LP/ACF procedure to hold for the true production function and some (not necessarily the true) law of motion for productivity. This condition calls for rethinking how the OP/LP/ACF procedure is implemented. In particular, it compels us to ensure that any instrument  used in the second step is appropriately included in the regression in the first step. Due to the lag structure of the model, this means including the lead of capital in addition to its current value in the regression. We provide a series of examples where this simple expedient suffices to satisfy our necessary and sufficient condition.

Beyond these examples, our necessary and sufficient condition may be violated. While this results in biased estimates, our third contribution is to show that rethinking how the OP/LP/ACF procedure is implemented mitigates the bias. Specifically, we show that the moment condition in the second step implicitly incorporates a first-order bias correction provided any instrument used in the second step is appropriately included in the regression in the first step.

Our fourth contribution is to explicitly incorporate a bias correction into the second step of the OP/LP/ACF procedure. We show that the modified moment condition has a property known as Neyman orthogonality \cite{NEYM:59}. While our modification remains as straightforward to implement as the original OP/LP/ACF procedure, Neyman orthogonality renders the asymptotic distribution of the GMM estimator in the second step invariant to estimation noise and the quality of the prediction of output from the first step. This is particularly advantageous if the regression in the first step includes a large number of covariates. Neyman orthogonality facilitates the use of a wide range of estimation methods in the first step, including traditional nonparametric methods as well as modern machine learning techniques such as neural networks and random forests. A Monte Carlo exercise shows that Neyman orthogonality can substantially improve the performance of the GMM estimator in finite samples.

Our fifth and final contribution is to provide a diagnostic to assess the sensitivity of the estimates to the size of the prediction error in the first step. While our diagnostic has some similarities to the sensitivity measure in \citeasnoun{ANDR:17}, it is not local to the true model and directly informative about the estimated model. Our diagnostic is neither necessary nor sufficient for no bias. However, a small value of the diagnostic provides assurance that small changes in the prediction error do not dramatically alter the estimates.

In sum, this paper examines the OP/LP/ACF procedure if the invertibility assumption fails. Invertibility fails if there are demand shocks or in imperfectly competitive environments with partially or completely unobserved rivals or changes in firm conduct. Whether invertibility fails can be tested. A failure of invertibility can have a substantial, detrimental impact on the estimates.

Fortunately, much can still be done. We provide a necessary and sufficient condition for the moment condition in the second step of the OP/LP/ACF procedure to hold for the true production function. This condition compels us to ensure that any instrument used in the second step is appropriately included in the regression in the first step. We show that this simple change either eliminates or mitigates the bias. Going a step further, we modify the moment condition in the second step to endow it with Neyman orthogonality. Finally, we provide a diagnostic to assess the sensitivity of the estimates to the size of the prediction error that arises in the first step of the OP/LP/ACF procedure if invertibility fails.

The remainder of the paper is organized as follows. In Section \ref{sec:model}, we recall the setup and the OP/LP/ACF procedure. In Section \ref{sec:invertibility}, we develop tests for invertibility. In Section \ref{sec:validity}, we develop a necessary and sufficient condition for the moment condition in the second step to hold for the true production function. We show that ensuring that any instrument used in the second step is appropriately included in the regression in the first step either eliminates or mitigates the bias that arises if invertibility fails. In Section \ref{sec:modification_orthogonalization}, we modify the moment condition in the second step to endow it with Neyman orthogonality. In Section \ref{sec:montecarlo}, we conduct a Monte Carlo exercise to illustrate what goes wrong in the OP/LP/ACF procedure if invertibility fails and what can still be done. In Section \ref{sec:sensitivity}, we provide a diagnostic to assess the sensitivity of the estimates to the size of the prediction error. We conclude in Section \ref{sec:conclusion}.

\section{Setup and OP/LP/ACF procedure\label{sec:model}}

Firm $i$ in period $t$ produces output $Q_{it}$ with inputs $K_{it}$ and $V_{it}$ according to the production function
\begin{equation}\label{eq:prodfun}
q_{it}=f(k_{it},v_{it})+\omega_{it}+\varepsilon_{it},
\end{equation}
where lower case letters denote logs. Capital $k_{it}$ is a predetermined input that is chosen in period $t-1$ whereas $v_{it}$ is freely variable and decided on in period $t$ after the firm observes its productivity $\omega_{it}$.\footnote{While $k_{it}$ and $v_{it}$ may be vectors, we think of them as scalars for simplicity. The variable input $v_{it}$ may accordingly be interpreted as a composite of labor and materials such as cost of goods sold.} Productivity follows a first-order Markov process with law of motion
\begin{equation}\label{eq:law}
\omega_{it}=\expt\left[\omega_{it}|\omega_{it-1}\right]+\xi_{it}=g(\omega_{it-1})+\xi_{it},
\end{equation}
where the productivity innovation $\xi_{it}$ is by construction mean independent of lagged productivity $\omega_{it-1}$ and further assumed to be mean independent of any variable included in the firm's information set in period $t-1$. The disturbance $\varepsilon_{it}$ sits between the firm's output $q_{it}$ as recorded in the data and the output $q^*_{it}=q_{it}-\varepsilon_{it}=f(k_{it},v_{it})+\omega_{it}$ that the firm planned on when it decided on the variable input $v_{it}$. It can be interpreted as measurement error or as an unanticipated shock to output (OP, pp. 1273--1274) and is assumed to be mean independent of the inputs and other included variables as formalized below.\footnote{\label{fn:component}\citeasnoun{MUND:65} refer to $\omega_{it}$ and $\varepsilon_{it}$ as the transmitted, respectively, untransmitted component of productivity. The untransmitted component may include machine breakdowns, labor actions, supply chain disruptions, and power outages that are not anticipated by the firm. The interpretation as measurement error accommodates serial correlation in the disturbance $\varepsilon_{it}$.}  While the econometrician observes actual output $q_{it}$ and the inputs $k_{it}$ and $v_{it}$, productivity $\omega_{it}$, the disturbance $\varepsilon_{it}$, and planned output $q^*_{it}$ remain unobserved.

\paragraph{Invertibility.}

The literature following OP relies on invertibility. Invertibility assumes that there exists a function $\omega_{it}=h(x_{it})$ that maps observables $x_{it}=\left(k_{it},v_{it},\ldots\right)$ into productivity $\omega_{it}$. This is equivalently to
\begin{equation*}
\expt\left[\omega_{it}|x_{it}\right]=\omega_{it}.
\end{equation*}
The observables included in $x_{it}$ depend on the decision of the firm that is being inverted.\footnote{OP invert the firm's demand for investment whereas LP and ACF invert its demand for materials.} We remain agnostic about which variables are included in $x_{it}$, aside from imposing, without loss of generality, that the inputs $k_{it}$ and $v_{it}$ are included.

\paragraph{OP/LP/ACF procedure.}

Estimation proceeds in two steps. Step 1 assumes $\expt\left[\varepsilon_{it}|x_{it}\right]=0$ and flexibly or nonparametrically estimates the conditional expectation
\begin{equation}\label{eq:step1}
\expt\left[q_{it}|x_{it}\right]=\expt\left[\left.f(k_{it},v_{it})+\omega_{it}+\varepsilon_{it}\right|x_{it}\right]
=f(k_{it},v_{it})+\expt\left[\omega_{it}|x_{it}\right]=f(k_{it},v_{it})+\omega_{it} = q^*_{it},
\end{equation}
where the first equality uses equation \eqref{eq:prodfun}, the second-to-last equality uses invertibility, and the last equality uses the definition of planned output $q^*_{it}$. Because $\expt\left[q_{it}|x_{it}\right]=q^*_{it}$, step 1 separates actual output $q_{it}$ into planned output $q^*_{it}$ and the disturbance $\varepsilon_{it}=q_{it}-q^*_{it}$.

Step 2 of the OP/LP/ACF procedure assumes $\expt\left[\left.\xi_{it}+\varepsilon_{it}\right|z_{it}\right]=0$ for instruments $z_{it}=\left(k_{it},k_{it-1},v_{it-1},\ldots\right)$ and estimates the parameters $\theta=(\theta_f,\theta_g)$ in the production function $f$ and the law of motion $g$ by GMM.\footnote{For notational convenience we suppress $\theta$ in much of what follows.} Capital $k_{it}$ is a valid instrument because it is a predetermined input, and the lagged inputs $k_{it-1}$ and $v_{it-1}$ are valid instruments because the productivity innovation $\xi_{it}$ is mean independent of any variable included in the firm's information set in period $t-1$. We remain agnostic regarding additional instruments in $z_{it}$.

Using equations \eqref{eq:prodfun} and \eqref{eq:law} and the definition of planned output $q^*_{it}$, the assumption $\expt\left[\left.\xi_{it}+\varepsilon_{it}\right|z_{it}\right]=0$ implies
	\begin{equation}\label{eq:true_step2}
		\expt[q_{it} - f(k_{it}, v_{it}) - g(\omega_{it-1} )|z_{it}] =
		\expt[q_{it} - f(k_{it}, v_{it}) - g(q^*_{it-1} - f(k_{it-1}, v_{it-1}))|z_{it}] = 0.
	\end{equation}
Moment condition \eqref{eq:true_step2} is infeasible for estimation because lagged planned output $q^*_{it-1}$ is unobserved. Substituting $\expt[q_{it-1}|x_{it-1}]$ from step 1 for $q^*_{it-1}$, step 2 therefore proceeds by estimating the parameters $\theta$ from the moment condition
\begin{equation}
\expt\left[q_{it}-f(k_{it},v_{it})-g\left(\expt\left[q_{it-1}|x_{it-1}\right]-f(k_{it-1},v_{it-1})\right)|z_{it}\right] = 0. \label{eq:step2}
\end{equation}

Throughout the remainder of the paper, we follow OP, LP, and ACF and maintain $\expt\left[\varepsilon_{it}|x_{it}\right]=0$ for observables $x_{it}=\left(k_{it},v_{it},\ldots\right)$ and $\expt\left[\left.\xi_{it}+\varepsilon_{it}\right|z_{it}\right]=0$ for instruments $z_{it}=\left(k_{it},k_{it-1},v_{it-1},\ldots\right)$. To avoid cumbersome notation, we adopt the convention that all equalities involving random variables and conditional expectations are understood to hold almost surely.

\section{Tests for invertibility\label{sec:invertibility}}

If invertibility fails and $\expt\left[\omega_{it}|x_{it}\right]\neq\omega_{it}$, then productivity $\omega_{it}$ becomes a hidden state in a Markov model. The literature on Kalman and particle filters in dynamic systems shows that the best guess for the unobservables uses the entire history of the observables as opposed to just their current value. Our tests for invertibility are based on this intuition.

Our first proposition shows that invertibility implies a testable mean-independence restriction:
\begin{proposition}\label{prop:test1}
If $\expt\left[\omega_{it}|x_{it}\right]=\omega_{it}$ and $\expt\left[\varepsilon_{it}|x_{it},x_{it-1}\right]=\expt\left[\varepsilon_{it}|x_{it}\right]$, then $\expt\left[q_{it}|x_{it},x_{it-1}\right]=\expt\left[q_{it}|x_{it}\right]$.
\end{proposition}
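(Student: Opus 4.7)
The plan is to decompose $q_{it}$ using the production function \eqref{eq:prodfun} and evaluate the two conditional expectations $\mathbb{E}[q_{it}|x_{it},x_{it-1}]$ and $\mathbb{E}[q_{it}|x_{it}]$ term by term, showing that each of the three components $f(k_{it},v_{it})$, $\omega_{it}$, and $\varepsilon_{it}$ contributes the same amount under either conditioning set.

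First I would handle the deterministic input term. Since $x_{it}$ includes $k_{it}$ and $v_{it}$ by the running convention adopted in Section \ref{sec:model}, the quantity $f(k_{it},v_{it})$ is $\sigma(x_{it})$-measurable and therefore equal to its conditional expectation given $x_{it}$, and a fortiori equal to its conditional expectation given $(x_{it},x_{it-1})$. Next I would dispatch the productivity term using invertibility: the hypothesis $\mathbb{E}[\omega_{it}|x_{it}]=\omega_{it}$ says that $\omega_{it}$ is (a version of) a $\sigma(x_{it})$-measurable random variable, so enlarging the conditioning $\sigma$-algebra from $\sigma(x_{it})$ to $\sigma(x_{it},x_{it-1})$ does not change the conditional expectation, giving $\mathbb{E}[\omega_{it}|x_{it},x_{it-1}]=\omega_{it}=\mathbb{E}[\omega_{it}|x_{it}]$. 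Finally, the disturbance term is handled directly by the second hypothesis $\mathbb{E}[\varepsilon_{it}|x_{it},x_{it-1}]=\mathbb{E}[\varepsilon_{it}|x_{it}]$.

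Summing the three pieces by linearity of conditional expectation yields $\mathbb{E}[q_{it}|x_{it},x_{it-1}]=\mathbb{E}[q_{it}|x_{it}]$, which is the claim. There is essentially no obstacle here; the proof is a one-line verification once one writes out \eqref{eq:prodfun} inside the conditional expectation. The only subtlety worth flagging is that invertibility must be read as producing a $\sigma(x_{it})$-measurable version of $\omega_{it}$ (rather than merely an equality in $L^2$), which is the standard interpretation given that the paper's almost-sure convention is in force.
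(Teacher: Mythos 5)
Your proposal is correct and follows essentially the same route as the paper: both decompose $q_{it}$ via equation \eqref{eq:prodfun}, observe that $f(k_{it},v_{it})$ and (by invertibility) $\omega_{it}$ are functions of $x_{it}$ alone, and then invoke the hypothesis on $\varepsilon_{it}$ to conclude. Your explicit remark about reading invertibility as $\sigma(x_{it})$-measurability is a fair clarification of what the paper leaves implicit, but it does not change the argument.
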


While step 1 of the OP/LP/ACF procedure assumes $\expt\left[\varepsilon_{it}|x_{it}\right]=0$ for observables $x_{it}=\left(k_{it},v_{it},\ldots\right)$ and step 2 assumes $\expt\left[\xi_{it}+\varepsilon_{it}|z_{it}\right]=0$ for instruments $z_{it}=\left(k_{it},k_{it-1},v_{it-1},\ldots\right)$, this does not quite imply $\expt\left[\varepsilon_{it}|x_{it},x_{it-1}\right]=\expt\left[\varepsilon_{it}|x_{it}\right]$.
However, the interpretation of the disturbance $\varepsilon_{it}$ as an unanticipated shock to output that is outside the firm's information set in period $t$ implies $\expt\left[\varepsilon_{it}|x_{it},x_{it-1}\right]=\expt\left[\varepsilon_{it}|x_{it}\right] = 0$.\footnote{Under the interpretation of the disturbance $\varepsilon_{it}$ as measurement error, the assumption $\expt\left[\varepsilon_{it}|x_{it},x_{it-1}\right]=\expt\left[\varepsilon_{it}|x_{it}\right]$ holds if $x_{it-1}$ does not have predictive power for $\varepsilon_{it}$ beyond $x_{it}$.} The proof of Proposition \ref{prop:test1} is straightforward:
\begin{proof}
If $\expt\left[\omega_{it}|x_{it}\right]=\omega_{it}$, then
\begin{equation*}
q_{it}=f(k_{it},v_{it})+\omega_{it}+\varepsilon_{it}=f(k_{it},v_{it})+\expt\left[\omega_{it}|x_{it}\right]+\varepsilon_{it}.
\end{equation*}
$\expt\left[\varepsilon_{it}|x_{it},x_{it-1}\right]=\expt\left[\varepsilon_{it}|x_{it}\right]$ therefore implies $\expt\left[q_{it}|x_{it},x_{it-1}\right]=\expt\left[q_{it}|x_{it}\right]$.
\end{proof}

Our second proposition shows that invertibility implies a testable conditional-independence restriction:
\begin{proposition}\label{prop:test2}
If $\expt\left[\omega_{it}|x_{it}\right]=\omega_{it}$ and $\left.\varepsilon_{it} \independent x_{it-1}\right|x_{it}$, then $q_{it} \independent x_{it-1}|x_{it}$.
\end{proposition}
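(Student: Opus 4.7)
The plan is to piggyback on the measurability consequence of invertibility that drove Proposition \ref{prop:test1}, and then invoke the standard fact that conditional independence is preserved under transformations measurable with respect to the conditioning variable. In outline, invertibility reduces $q_{it}$ to a measurable function of $x_{it}$ plus the residual $\varepsilon_{it}$, so that, conditional on $x_{it}$, all variation in $q_{it}$ comes from $\varepsilon_{it}$; the hypothesized conditional independence of $\varepsilon_{it}$ from $x_{it-1}$ given $x_{it}$ then transfers directly to $q_{it}$.

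The first step I would carry out is the representation of $\omega_{it}$ as a deterministic function of $x_{it}$. Since $\expt[\omega_{it}|x_{it}]$ is by construction $\sigma(x_{it})$-measurable, the hypothesis $\expt[\omega_{it}|x_{it}]=\omega_{it}$ furnishes a measurable map $m$ with $\omega_{it}=m(x_{it})$ almost surely. Because $k_{it}$ and $v_{it}$ are components of $x_{it}$, the production equation \eqref{eq:prodfun} rewrites as
\begin{equation*}
q_{it}=\tilde{m}(x_{it})+\varepsilon_{it}, \qquad \tilde{m}(x_{it}):=f(k_{it},v_{it})+m(x_{it}),
\end{equation*}
with $\tilde{m}(x_{it})$ a measurable function of $x_{it}$ alone.

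The second step is to translate $\varepsilon_{it}\independent x_{it-1}|x_{it}$ into the desired statement for $q_{it}$. Since $\tilde{m}(x_{it})$ is $\sigma(x_{it})$-measurable, for any Borel sets $A,B$,
\begin{equation*}
\prob(q_{it}\in A,\, x_{it-1}\in B|x_{it}) = \prob(\varepsilon_{it}\in A-\tilde{m}(x_{it}),\, x_{it-1}\in B|x_{it}),
\end{equation*}
and by hypothesis this factors into $\prob(\varepsilon_{it}\in A-\tilde{m}(x_{it})|x_{it})\cdot\prob(x_{it-1}\in B|x_{it})=\prob(q_{it}\in A|x_{it})\cdot\prob(x_{it-1}\in B|x_{it})$, delivering $q_{it}\independent x_{it-1}|x_{it}$. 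I do not anticipate a substantive obstacle; the only delicate point is the measure-theoretic extraction of the function $m$ from the almost-sure identity $\expt[\omega_{it}|x_{it}]=\omega_{it}$, after which the conditional-independence conclusion is immediate from the additive structure of the production function and a routine manipulation of conditional probabilities.
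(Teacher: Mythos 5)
Your proof is correct and is essentially the argument the paper intends: the paper omits the proof of Proposition \ref{prop:test2}, stating only that it parallels Proposition \ref{prop:test1}, i.e.\ one writes $q_{it}=f(k_{it},v_{it})+\expt[\omega_{it}|x_{it}]+\varepsilon_{it}$ so that $q_{it}$ is a $\sigma(x_{it})$-measurable function plus $\varepsilon_{it}$, and the hypothesized conditional independence of $\varepsilon_{it}$ transfers to $q_{it}$. Your extraction of the measurable map $m$ and the factorization of the conditional probabilities is exactly this routine argument carried out explicitly.
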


Going beyond the first-order implication of $\expt\left[\omega_{it}|x_{it}\right]=\omega_{it}$ in Proposition \ref{prop:test1} calls for a stronger assumption on the disturbance $\varepsilon_{it}$. The assumption $\left.\varepsilon_{it} \independent x_{it-1}\right|x_{it}$ in Proposition \ref{prop:test2} means that any possible dependence between $\varepsilon_{it}$ and $x_{it-1}$ is through $x_{it}$. It is implied by $\varepsilon_{it} \independent (x_{it},x_{it-1})$ and implies $\expt\left[\varepsilon_{it}|x_{it},x_{it-1}\right]=\expt\left[\varepsilon_{it}|x_{it}\right]$. The proof of Proposition \ref{prop:test2} parallels the proof of Proposition \ref{prop:test1} and is therefore omitted.

We implement our tests for invertibility on an unbalanced panel of Spanish manufacturing firms from 1990 to 2006 (Encuesta Sobre Estrategias Empresariales) and a balanced panel of US manufacturing industries from 1958 to 2018 (NBER-CES). In both cases, we reject invertibility by a wide margin. We provide further details on the data and the tests in Appendix \ref{app:invertibility}. In the next section, we turn to the consequences of a failure of invertibility for production function estimation.

\section{Failure of invertibility and validity of OP/LP/ACF moment condition\label{sec:validity}}

If invertibility fails and $\expt\left[\omega_{it}|x_{it}\right]\neq\omega_{it}$, then a prediction error $\zeta_{it}=\omega_{it}-\expt\left[\omega_{it}|x_{it}\right]$ arises in step 1 of the OP/LP/ACF procedure that is generally non-zero. By construction, the prediction error satisfies $\expt\left[\zeta_{it}|x_{it}\right]=0$.

To see how the presence of the prediction error $\zeta_{it}$ affects the OP/LP/ACF procedure, recall that the conditional expectation estimated in step 1 is
\begin{equation}\label{eq:step1new}
	\expt\left[q_{it}|x_{it}\right]=f(k_{it},v_{it})+\expt\left[\omega_{it}|x_{it}\right] =f(k_{it},v_{it})+\omega_{it}-\zeta_{it} = q^*_{it} - \zeta_{it},
\end{equation}
where the first equality uses equation \eqref{eq:prodfun} and $\expt[\varepsilon_{it}|x_{it}] = 0$, the second equality uses the definition of the prediction error $\zeta_{it}$, and the last equality uses the definition of planned output $q^*_{it}$. Equation \eqref{eq:step1new} provides an alternative interpretation of the prediction error as the difference between planned output and its prediction in step 1: $\zeta_{it} = q^*_{it} - \expt[q_{it}|x_{it}]$.

Substituting $\expt[q_{it-1}|x_{it-1}]$ from equation \eqref{eq:step1new}, the left-hand side of moment condition \eqref{eq:step2} in step 2 of the OP/LP/ACF procedure becomes
\begin{equation}\label{eq:step2_with_prediction_error}
\expt\left[q_{it} -f(k_{it},v_{it})-g\left(q^*_{it-1} - \zeta_{it-1} -f(k_{it-1},v_{it-1})\right)|z_{it}\right].
\end{equation}
Because of the lagged prediction error $\zeta_{it-1}$, conditional moment \eqref{eq:step2_with_prediction_error} is generally non-zero. To see this more clearly, consider the special case of an $AR(1)$ process for productivity. If $g(\omega_{it-1})=\rho\omega_{it-1}$, then conditional moment \eqref{eq:step2_with_prediction_error} evaluates to
\begin{equation}\label{eq:step2ar1}
\expt\left[\xi_{it}+\varepsilon_{it}+\rho\zeta_{it-1}|z_{it}\right]=\rho \expt\left[\zeta_{it-1}|z_{it}\right],
\end{equation}
where the equality uses $\expt\left[\xi_{it}+\varepsilon_{it}|z_{it}\right]=0$. 

To further see how this affects the GMM estimation, recall that $x_{it-1}=\left(k_{it-1},v_{it-1},\ldots\right)$ and $z_{it}=\left(k_{it},k_{it-1},v_{it-1},\ldots\right)$. The lagged inputs $k_{it-1}$ and $v_{it-1}$ hence remain valid instruments because $\expt\left[\zeta_{it-1}|x_{it-1}\right]=0$ by construction. However, capital $k_{it}$ may no longer be a valid instrument as $\expt\left[\zeta_{it-1}|k_{it}\right]\neq 0$, as noted by \citeasnoun{DORA:21}. This is because the firm chooses $k_{it}$ in period $t-1$ with knowledge of $\omega_{it-1}$. What remains of lagged productivity after controlling for the lagged observables $x_{it-1}$ may therefore be correlated with $k_{it}$. This results in biased estimates.

The above discussion foreshadows part of the solution to the problem: adding the lead of capital $k_{it+1}$ to the observables $x_{it}$ in step 1 of the OP/LP/ACF procedure ensures that $\expt\left[\zeta_{it-1}|k_{it}\right]=0$ and hence the validity of capital $k_{it}$ as an instrument in step 2. This simple expedient amounts to rethinking how the OP/LP/ACF procedure is implemented. The remaining question is to what extent it suffices beyond the special case of an $AR(1)$ process for productivity.

The following theorem provides a complete answer to this question in the form of a necessary and sufficient condition.
\begin{theorem}\label{thm:main}
	Moment condition \eqref{eq:step2} in step 2 of the OP/LP/ACF procedure holds for the true production function $f^0$ and some law of motion $\tilde g$ if and only if
\begin{equation}\label{eq:iff_proxy_variable}
\expt\left[g^0(\omega_{it-1})|z_{it}\right]=\expt\left[\left.\tilde g\left(\expt\left[\omega_{it-1}|x_{it-1}\right]\right)\right|z_{it}\right],
\end{equation}
where $g^0$ is the true law of motion.
\end{theorem}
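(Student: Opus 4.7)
The plan is almost entirely substitutional: evaluate the left-hand side of moment condition \eqref{eq:step2} at the true production function $f^0$ and an arbitrary candidate law of motion $\tilde g$, simplify using the model primitives \eqref{eq:prodfun}--\eqref{eq:law} together with the two maintained mean-independence assumptions, and check that the result collapses exactly to the difference appearing in condition \eqref{eq:iff_proxy_variable}. Once that identity is in hand, the biconditional is immediate in both directions.

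I would carry this out in three micro-steps. First, I would use \eqref{eq:prodfun} to rewrite $q_{it} - f^0(k_{it},v_{it}) = \omega_{it} + \varepsilon_{it}$ and then apply the true law of motion \eqref{eq:law} to express this as $g^0(\omega_{it-1}) + \xi_{it} + \varepsilon_{it}$. Second, I would simplify the argument of $\tilde g$ appearing inside \eqref{eq:step2}: equation \eqref{eq:prodfun} at date $t-1$ combined with the maintained assumption $\expt[\varepsilon_{it-1}|x_{it-1}]=0$ yields
\[
\expt[q_{it-1}|x_{it-1}] - f^0(k_{it-1},v_{it-1}) = \expt[\omega_{it-1}|x_{it-1}].
\]
Crucially, this identity uses only mean-independence, not invertibility, so what step 1 delivers as a substitute for $\omega_{it-1}$ is exactly $\expt[\omega_{it-1}|x_{it-1}]$ — a quantity that generally differs from $\omega_{it-1}$ itself when invertibility fails. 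Third, I would take expectations given $z_{it}$ and use $\expt[\xi_{it}+\varepsilon_{it}|z_{it}]=0$ to kill the innovation terms, which collapses the left-hand side of \eqref{eq:step2} to
\[
\expt[g^0(\omega_{it-1})|z_{it}] - \expt[\tilde g(\expt[\omega_{it-1}|x_{it-1}])|z_{it}].
\]

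I do not anticipate any genuine obstacle: the content of the theorem is really bookkeeping once the substitutions are organized cleanly. The only care required is conceptual — to keep the roles of $g^0$ and $\tilde g$ distinct, and to emphasize that the object fed into $\tilde g$ in the feasible moment condition is the step-1 proxy $\expt[\omega_{it-1}|x_{it-1}]$ rather than lagged productivity itself. This last point is precisely why the theorem states the condition in terms of some $\tilde g$ rather than $g^0$: when invertibility fails, the law of motion that makes the moment condition valid is in general not the true one, but any $\tilde g$ satisfying \eqref{eq:iff_proxy_variable} after conditioning on $z_{it}$.
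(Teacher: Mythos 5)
Your proposal is correct and follows essentially the same route as the paper: the paper likewise reduces the left-hand side of moment condition \eqref{eq:step2} to the difference $\expt[g^0(\omega_{it-1})|z_{it}]-\expt[\tilde g(\expt[\omega_{it-1}|x_{it-1}])|z_{it}]$ by invoking the valid-but-infeasible moment condition \eqref{eq:true_step2} together with the identity \eqref{eq:step1new}, which is exactly your second micro-step. The only cosmetic difference is that the paper packages your first and third steps into the already-derived equation \eqref{eq:true_step2} rather than re-deriving them inline.
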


We prove Theorem \ref{thm:main} before parsing it:
\begin{proof}
Recall that the true production function $f^0$ and the true law of motion $g^0$ satisfy moment condition \eqref{eq:true_step2}.

\noindent {\em ``Only if'' part:} Suppose moment condition \eqref{eq:step2} holds for the true production $f^0$ and some law of motion $\tilde g$. Then we have
\begin{gather*}
0=\expt\left[\left.q_{it}-f^0(k_{it},v_{it})-\tilde g\left(\expt\left[q_{it-1}|x_{it-1}\right]-f^0(k_{it-1},v_{it-1})\right)\right|z_{it}\right] \\
=\expt\left[\left.q_{it}-f^0(k_{it},v_{it})-\tilde g\left(\expt\left[\omega_{it-1}|x_{it-1}\right]\right)\right|z_{it}\right],
\end{gather*}
where the last equality uses equation \eqref{eq:step1new}. Subtracting from equation \eqref{eq:true_step2} implies condition \eqref{eq:iff_proxy_variable}.

\noindent {\em ``If'' part:} Suppose condition \eqref{eq:iff_proxy_variable} holds. Substituting into equation \eqref{eq:true_step2}, we have
\begin{gather*}
0=\expt\left[\left.q_{it}-f^0(k_{it},v_{it})-\tilde g\left(\expt\left[\omega_{it-1}|x_{it-1}\right]\right)\right|z_{it}\right] \\
=\expt\left[\left.q_{it}-f^0(k_{it},v_{it})-\tilde g\left(\expt\left[q_{it-1}|x_{it-1}\right]-f^0(k_{it-1},v_{it-1})\right)\right|z_{it}\right],
\end{gather*}
where the last equality uses equation \eqref{eq:step1new}. Hence, moment condition \eqref{eq:step2} holds for $f^0$ and $\tilde g$.
\end{proof}

Theorem \ref{thm:main} covers the invertibility assumption at the heart of the literature following OP as a special case. If $\expt\left[\omega_{it}|x_{it}\right]=\omega_{it}$, then condition \eqref{eq:iff_proxy_variable} becomes $\expt\left[g^0(\omega_{it-1})|z_{it}\right]=\expt\left[\tilde g\left(\omega_{it-1}\right)|z_{it}\right]$ and is therefore satisfied with $\tilde g=g^0$. Importantly, however, condition \eqref{eq:iff_proxy_variable} can be satisfied even if invertibility fails and $\expt\left[\omega_{it}|x_{it}\right]\neq \omega_{it}$.

To understand condition \eqref{eq:iff_proxy_variable}, note that it becomes $\expt\left[g^0(\omega_{it-1})|z_{it}\right]=\tilde g\left(\expt\left[\omega_{it-1}|x_{it-1}\right]\right)$ if $x_{it-1}\subsetneq z_{it}$. This is difficult to satisfy as the left-hand side is a function of $z_{it}$ whereas the right-hand side is a function of $x_{it-1}\subsetneq z_{it}$.

We therefore set $x_{it-1}\supseteq z_{it}$ in what follows. This amounts to rethinking how the OP/LP/ACF procedure is implemented, as foreshadowed by the special case of an $AR(1)$ process for productivity. In particular, because $k_{it}$ is included in $z_{it}$ in step 2, we must include $k_{it+1}$ in $x_{it}$ in step 1.

With this choice of $x_{it-1}\supseteq z_{it}$, we further examine condition \eqref{eq:iff_proxy_variable} through a series of examples:

\begin{example}[linear law of motion, $AR(1)$ process]\label{ex:linear}
If $g^0(\omega_{it-1})=\rho\omega_{it-1}$, then $\expt\left[\rho\omega_{it-1}|z_{it}\right]=\expt\left[\rho \expt\left[\omega_{it-1}|x_{it-1}\right]|z_{it}\right]$ by the law of iterated expectations. Condition (\ref{eq:iff_proxy_variable}) is therefore satisfied with $\tilde g(\expt[\omega_{it-1}|x_{it-1}])=g^0(\expt[\omega_{it-1}|x_{it-1}])$.
\end{example}
\noindent Remarkably, moment condition \eqref{eq:step2} in step 2 of the OP/LP/ACF procedure holds in Example \ref{ex:linear} irrespective of the quality of the estimate of $\expt\left[q_{it}|x_{it}\right]$ in step 1.

\begin{example}[quadratic law of motion]\label{ex:quadratic}
If $g^0(\omega_{it-1})=\rho_1\omega_{it-1}+\rho_2\omega_{it-1}^2$ and $\var\left(\zeta_{it-1}|z_{it}\right)=\sigma^2$, then
\begin{equation*}
\expt\left[\left.\rho_1\omega_{it-1}+\rho_2\omega_{it-1}^2\right|z\right]=\expt\left[\left.\rho_1\expt\left[\omega_{it-1}|x_{it-1}\right]+\rho_2\expt\left[\omega_{it-1}|x_{it-1}\right]^2+\rho_2\zeta_{it-1}^2\right|z_{it}\right].
\end{equation*}
Condition \eqref{eq:iff_proxy_variable} is therefore satisfied with $\tilde g(\expt[\omega_{it-1}|x_{it-1}])=g^0(\expt[\omega_{it-1}|x_{it-1}])+\rho_2\sigma^2$.\footnote{We can relax $\var\left(\zeta_{it-1}|z_{it}\right)=\sigma^2$ to $\var\left(\zeta_{it-1}|z_{it}\right) = h(\expt[\omega_{it-1}|x_{it-1}])$ for some function $h$. In this case, condition \eqref{eq:iff_proxy_variable} holds with $\tilde g(\expt[\omega_{it-1}|x_{it-1}])=g^0(\expt[\omega_{it-1}|x_{it-1}])+\rho_2 h(\expt[\omega_{it-1}|x_{it-1}])$. }
\end{example}

Example \ref{ex:quadratic} is less restrictive on the law of motion but more restrictive on the lagged prediction error than Example \ref{ex:linear}. The next example pushes this tradeoff further:
\begin{example}[analytic law of motion]\label{ex:analytic}
If the Taylor series of $g^0$ around $\expt[\omega_{it-1}|x_{it-1}]$ converges absolutely so that $g^0(\omega_{it-1})=\sum_{j=0}^\infty \frac{g^{0,(j)}(\expt[\omega_{it-1}|x_{it-1}])}{j!}\zeta^j_{it-1}$, where $g^{0, (j)}$ is the $j$th derivative of $g^0$, then
\begin{gather*}
\expt\left[\left.\sum_{j=0}^\infty \frac{g^{0,(j)}(\expt[\omega_{it-1}|x_{it-1}])}{j!}\zeta^j_{it-1}\right|z_{it}\right]
=\expt\left[\left.\expt\left[\left.\sum_{j=0}^\infty\frac{g^{0,(j)}(\expt[\omega_{it-1}|x_{it-1}])}{j!}\zeta^j_{it-1}\right|x_{it-1}\right]\right|z_{it}\right] \\
=\expt\left[\left.\sum_{j=0}^\infty\frac{g^{0,(j)}(\expt[\omega_{it-1}|x_{it-1}])}{j!}\expt[\zeta^j_{it-1}|x_{it-1}]\right|z_{it}\right].
\end{gather*}
If $\expt[\zeta^j_{it-1}|x_{it-1}]=\alpha_j$ for some constant $\alpha_j$ for all $j$, then condition \eqref{eq:iff_proxy_variable} is satisfied in this example with $\tilde g(\expt[\omega_{it-1}|x_{it-1}])=\sum_{j=0}^\infty\frac{g^{0,(j)}(\expt[\omega_{it-1}|x_{it-1}])}{j!}\alpha_j$.\footnote{Condition \eqref{eq:iff_proxy_variable} remains valid if we relax the condition that $\expt[\zeta^j_{it-1}|x_{it-1}]=\alpha_j$ for some constant $\alpha_j$ for all  $j$ to $\left.\zeta_{it-1}\independent z_{it}\right| \expt[\omega_{it-1}|x_{it-1}]$.}
\end{example}

Beyond these examples, condition \eqref{eq:iff_proxy_variable} may be violated even if $x_{it-1}\supseteq z_{it}$.\footnote{Testing condition \eqref{eq:iff_proxy_variable} is empirically challenging. A necessary condition for it to be satisfied is that moment condition \eqref{eq:step2} in step 2 of the OP/LP/ACF procedure holds at the estimated production function and law of motion. This can be tested with an overidentification test that corrects for the plug-in nature of the OP/LP/ACF procedure and clustering in the data.} While this results in biased estimates, setting $x_{it-1}\supseteq z_{it}$ mitigates the bias. If $x_{it-1}\supseteq z_{it}$, then we have $\expt\left[\omega_{it-1}|z_{it}\right]=\expt\left[\expt\left[\omega_{it-1}|x_{it-1}\right]|z_{it}\right]$ by the law of iterated expectations and hence $\expt\left[\zeta_{it-1}|z_{it}\right]$ $=$ $\expt\left[\left.\omega_{it-1}-\expt\left[\omega_{it-1}|x_{it-1}\right]\right|z_{it}\right]$ $=0$. This mean independence of the lagged prediction error $\zeta_{it-1}$ from the instruments $z_{it}$ intuitively eliminates a first-order source of bias. The following theorem formalizes this intuition:

\begin{theorem}\label{thm:bias_reduction}
If $x_{it-1}\supseteq z_{it}$ and $g$ is differentiable, then moment condition \eqref{eq:step2} in step 2 of the OP/LP/ACF procedure implicitly incorporates a first-order bias correction:
	\begin{multline}\label{eq:first_order_correction}
\expt\left[q_{it}-f(k_{it},v_{it})-g\left(\expt\left[q_{it-1}|x_{it-1}\right]-f(k_{it-1},v_{it-1})\right)|z_{it}\right] = \expt\left[q_{it}-f(k_{it},v_{it})  \right.\\
 -  g\left(q^*_{it-1} - \zeta_{it-1} -f(k_{it-1},v_{it-1}) \right) - \left. g^\prime\left(q^*_{it-1} - \zeta_{it-1} -f(k_{it-1},v_{it-1}) \right)\zeta_{it-1} \big| z_{it}\right],
	\end{multline}
where $g^{\prime}$ is the first derivative of $g$.
 \end{theorem}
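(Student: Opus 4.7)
The plan is to show that both sides of the identity have $g$ evaluated at the \emph{same} argument, and that the additional first-derivative term on the right-hand side has conditional mean zero given $z_{it}$. The proof then reduces to an application of the tower property and the defining property $\expt[\zeta_{it-1}|x_{it-1}]=0$ of the prediction error.

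First, I would note that equation \eqref{eq:step1new} applied at time $t-1$ gives $\expt[q_{it-1}|x_{it-1}] = q^*_{it-1} - \zeta_{it-1}$. Therefore the argument of $g$ appearing on the left-hand side of \eqref{eq:first_order_correction},
\[
\expt[q_{it-1}|x_{it-1}] - f(k_{it-1},v_{it-1}) \;=\; q^*_{it-1} - \zeta_{it-1} - f(k_{it-1},v_{it-1}),
\]
coincides with the argument of $g$ appearing on the right-hand side. Cancelling the common terms $q_{it}$, $f(k_{it},v_{it})$, and $g(q^*_{it-1} - \zeta_{it-1} - f(k_{it-1},v_{it-1}))$ inside the conditional expectation, the identity \eqref{eq:first_order_correction} reduces to the single claim
\[
\expt\!\left[g'\!\left(q^*_{it-1} - \zeta_{it-1} - f(k_{it-1},v_{it-1})\right)\zeta_{it-1}\,\big|\,z_{it}\right] = 0.
\]

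Next, I would establish that the argument of $g'$ is $x_{it-1}$-measurable. Because $k_{it-1}$ and $v_{it-1}$ are components of $x_{it-1}$, the term $f(k_{it-1},v_{it-1})$ is $x_{it-1}$-measurable; and $q^*_{it-1}-\zeta_{it-1}=\expt[q_{it-1}|x_{it-1}]$ is $x_{it-1}$-measurable by definition. Hence the entire factor $g'(q^*_{it-1}-\zeta_{it-1}-f(k_{it-1},v_{it-1}))$ is a function of $x_{it-1}$ alone.

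Finally, invoking the hypothesis $x_{it-1}\supseteq z_{it}$, I would apply the tower property by conditioning first on $x_{it-1}$. Since the $g'$ factor is $x_{it-1}$-measurable, it pulls outside the inner conditional expectation, yielding
\[
\expt\!\left[\expt\!\left[g'(\cdot)\zeta_{it-1}\,|\,x_{it-1}\right]\big|\,z_{it}\right] \;=\; \expt\!\left[g'(\cdot)\,\expt[\zeta_{it-1}|x_{it-1}]\,\big|\,z_{it}\right] \;=\; 0,
\]
where the final equality uses $\expt[\zeta_{it-1}|x_{it-1}]=0$ by construction of the prediction error. This establishes \eqref{eq:first_order_correction}.

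The ``main obstacle'' here is really just pinpointing the correct conceptual role of the hypothesis $x_{it-1}\supseteq z_{it}$: it is precisely what permits the iterated expectations step that factors out $g'$. Differentiability of $g$ is used only implicitly, in that $g'$ must be well defined for the expression on the right-hand side to make sense; no Taylor-remainder estimate or further smoothness argument is needed, since the identity is exact rather than approximate.
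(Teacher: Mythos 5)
Your proof is correct and follows essentially the same route as the paper's: both reduce the identity to showing $\expt\left[g'\left(\expt[q_{it-1}|x_{it-1}]-f(k_{it-1},v_{it-1})\right)\zeta_{it-1}\,\middle|\,z_{it}\right]=0$, then use $x_{it-1}\supseteq z_{it}$ with the law of iterated expectations to condition on $x_{it-1}$, pull out the $x_{it-1}$-measurable $g'$ factor, and invoke $\expt[\zeta_{it-1}|x_{it-1}]=0$. Your added remark on the precise role of the hypothesis $x_{it-1}\supseteq z_{it}$ matches the paper's reasoning exactly.
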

\noindent We note that equation \eqref{eq:first_order_correction} holds for any production function $f$ and any law of motion $g$ for which the conditional moments exist. The proof of Theorem \ref{thm:bias_reduction} is in Appendix \ref{app:validity}.

To appreciate Theorem \ref{thm:bias_reduction}, recall that moment condition \eqref{eq:true_step2} is valid even if invertibility fails. Because lagged planned output $q^*_{it-1}$ is unobserved, moment condition \eqref{eq:true_step2} is infeasible for estimation. Step 2 of the OP/LP/ACF procedure therefore proceeds by substituting $\expt\left[q_{it-1}|x_{it-1}\right]$ from step 1 for $q^*_{it-1}$. If invertibility fails, then this substitution introduces the lagged prediction error $\zeta_{it-1}=q^*_{it-1}-\expt\left[q_{it-1}|x_{it-1}\right]$ into moment condition \eqref{eq:step2} (see conditional moment \eqref{eq:step2_with_prediction_error}). Theorem \ref{thm:bias_reduction} shows that setting $x_{it-1}\supseteq z_{it}$ adjusts moment condition \eqref{eq:step2} toward the valid but infeasible moment condition \eqref{eq:true_step2}.

Indeed, comparing the term $g\left(\expt\left[q_{it-1}|x_{it-1}\right]-f(k_{it-1},v_{it-1})\right)$ in moment condition \eqref{eq:step2} to the term $g\left(q^*_{it-1}-f(k_{it-1},v_{it-1})\right)$ in moment condition \eqref{eq:true_step2} yields
\begin{eqnarray}
 && g\left(\expt\left[q_{it-1}|x_{it-1}\right]-f(k_{it-1},v_{it-1}) \right) - g\left(q^*_{it-1}-f(k_{it-1},v_{it-1})\right)\nonumber \\
 &=& g\left(q^*_{it-1} - \zeta_{it-1}-f(k_{it-1},v_{it-1}) \right) - g\left(q^*_{it-1}-f(k_{it-1},v_{it-1})\right)\nonumber\\
 &=& -g^\prime\left(q^*_{it-1} - \zeta_{it-1}-f(k_{it-1},v_{it-1}) \right) \zeta_{it-1} + o(\zeta_{it-1}), \label{eq:first_order_expansion}
 \end{eqnarray}
where the last equality uses a first-order Taylor expansion of $g$ around $q^*_{it-1} - \zeta_{it-1} -f(k_{it-1},v_{it-1})$ and $o(\zeta_{it-1})$ denotes the higher-order remainder. Equation \eqref{eq:first_order_correction} in Theorem \ref{thm:bias_reduction} shows that if $x_{it-1}\supseteq z_{it}$, then moment condition \eqref{eq:step2} implicitly accounts for the term $-g^\prime\left(q^*_{it-1} - \zeta_{it-1}-f(k_{it-1},v_{it-1}) \right) \zeta_{it-1}$. Moment condition \eqref{eq:step2} thus differs from the valid but infeasible moment condition \eqref{eq:true_step2} at most by higher-order terms. Put differently, setting $x_{it-1}\supseteq z_{it}$ provides a first-order bias correction in step 2 of the OP/LP/ACF procedure.

Because setting $x_{it-1}\supseteq z_{it}$ provides a first-order bias correction in step 2 of the OP/LP/ACF procedure, a violation of condition \eqref{eq:iff_proxy_variable} must be of second order relative to size of the prediction error. The following theorem formalizes this intuition:
\begin{theorem}\label{thm:bound}
Let $\mathcal{G}$ be a set of functions that includes the true law of motion $g^0$. If $x_{it-1}\supseteq z_{it}$ and $g^{0}$ is twice continuously differentiable, then
\begin{equation*}
	\inf_{\tilde g\in \mathcal{G}}\norm{\expt\left[g^0(\omega_{it-1})|z_{it}\right]-\expt\left[\tilde g\left(\expt\left[\omega_{it-1}|x_{it-1}\right]\right)|z_{it}\right]}_{L,1}\leq \tau\var\left(\zeta_{it-1}\right),
\end{equation*}
where $\norm{\cdot}_{L, 1}$ is the $L_1$ norm, $\tau=\sup_{\omega_{it-1}}\left|g^{0\prime\prime}(\omega_{it-1})\right|$, and $g^{0\prime\prime}$ is the second derivative of $g^0$.
\end{theorem}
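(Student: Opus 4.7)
\textbf{Proof proposal for Theorem \ref{thm:bound}.} The plan is to upper-bound the infimum by evaluating the expression at the true law of motion $\tilde g = g^0$, which lies in $\mathcal{G}$ by assumption. It then suffices to show that
\begin{equation*}
\norm{\expt\left[g^0(\omega_{it-1})|z_{it}\right]-\expt\left[g^0\left(\expt\left[\omega_{it-1}|x_{it-1}\right]\right)|z_{it}\right]}_{L,1} \leq \tau\var(\zeta_{it-1}).
\end{equation*}
The core idea is a second-order Taylor expansion of $g^0$ around $\expt[\omega_{it-1}|x_{it-1}]$, with the linear term neutralized by $x_{it-1}\supseteq z_{it}$ and $\expt[\zeta_{it-1}|x_{it-1}]=0$, leaving only the quadratic remainder to control.

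Concretely, writing $\omega_{it-1} = \expt[\omega_{it-1}|x_{it-1}] + \zeta_{it-1}$ and applying Taylor's theorem with the Lagrange form of the remainder (using twice continuous differentiability of $g^0$) gives
\begin{equation*}
g^0(\omega_{it-1}) = g^0\!\left(\expt[\omega_{it-1}|x_{it-1}]\right) + g^{0\prime}\!\left(\expt[\omega_{it-1}|x_{it-1}]\right)\zeta_{it-1} + \tfrac{1}{2} g^{0\prime\prime}(\tilde\omega_{it-1}) \zeta_{it-1}^2,
\end{equation*}
for some $\tilde\omega_{it-1}$ between $\omega_{it-1}$ and $\expt[\omega_{it-1}|x_{it-1}]$. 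Taking $\expt[\cdot|z_{it}]$ on both sides and rearranging yields
\begin{equation*}
\expt[g^0(\omega_{it-1})|z_{it}] - \expt\!\left[g^0(\expt[\omega_{it-1}|x_{it-1}])|z_{it}\right] = \expt\!\left[g^{0\prime}(\expt[\omega_{it-1}|x_{it-1}])\zeta_{it-1}|z_{it}\right] + \tfrac{1}{2}\expt\!\left[g^{0\prime\prime}(\tilde\omega_{it-1})\zeta_{it-1}^2|z_{it}\right].
\end{equation*}

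The next step is the key use of $x_{it-1}\supseteq z_{it}$. By the tower property, the linear term equals
\begin{equation*}
\expt\!\left[\expt\!\left[g^{0\prime}(\expt[\omega_{it-1}|x_{it-1}])\zeta_{it-1}\big|x_{it-1}\right]\big|z_{it}\right] = \expt\!\left[g^{0\prime}(\expt[\omega_{it-1}|x_{it-1}])\,\expt[\zeta_{it-1}|x_{it-1}]\big|z_{it}\right] = 0,
\end{equation*}
since $g^{0\prime}(\expt[\omega_{it-1}|x_{it-1}])$ is $x_{it-1}$-measurable and $\expt[\zeta_{it-1}|x_{it-1}]=0$ by construction of the prediction error. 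Only the quadratic remainder survives.

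To finish, bound the remainder using $|g^{0\prime\prime}|\leq \tau$ and then take the $L_1$ norm:
\begin{equation*}
\norm{\tfrac{1}{2}\expt\!\left[g^{0\prime\prime}(\tilde\omega_{it-1})\zeta_{it-1}^2|z_{it}\right]}_{L,1} \leq \tfrac{\tau}{2}\expt\!\left[\expt[\zeta_{it-1}^2|z_{it}]\right] = \tfrac{\tau}{2}\expt[\zeta_{it-1}^2] = \tfrac{\tau}{2}\var(\zeta_{it-1}),
\end{equation*}
where the last equality uses $\expt[\zeta_{it-1}]=\expt[\expt[\zeta_{it-1}|x_{it-1}]]=0$. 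This is sharper than the stated bound $\tau\var(\zeta_{it-1})$, hence the conclusion follows. I don't anticipate a real obstacle: the only subtle point is ensuring the linear term vanishes, which is precisely where the assumption $x_{it-1}\supseteq z_{it}$ does its work, mirroring the first-order bias-correction mechanism highlighted in Theorem \ref{thm:bias_reduction}.
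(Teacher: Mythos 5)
Your proof is correct and follows essentially the same route as the paper's: bound the infimum by taking $\tilde g=g^0$, Taylor-expand around $\expt[\omega_{it-1}|x_{it-1}]$, kill the linear term via the tower property using $x_{it-1}\supseteq z_{it}$ and $\expt[\zeta_{it-1}|x_{it-1}]=0$, and bound the quadratic remainder by $\tau\,\expt[\zeta_{it-1}^2]$. The only difference is that you retain the factor $\tfrac{1}{2}$ from the Lagrange form of the remainder, which the paper's displayed expansion drops, so you in fact obtain the sharper constant $\tfrac{\tau}{2}\var(\zeta_{it-1})$, which of course implies the stated bound.
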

\noindent The proof of Theorem \ref{thm:bound} is in Appendix \ref{app:validity}.

Theorem \ref{thm:bound} ties the violation of condition \eqref{eq:iff_proxy_variable} to the variance of the lagged prediction error $\zeta_{it-1}$. Because $\var(\zeta_{it-1}) = \expt[ (\omega_{it-1} - \expt[\omega_{it-1}|x_{it-1}])^2]$, this variance gets smaller as the observables $x_{it-1}$ get richer even beyond $z_{it}$. Theorem \ref{thm:bound} therefore suggests to take a kitchen sink approach to the regression in step 1 of the OP/LP/ACF procedure and add as many relevant covariates as possible. Adding covariates that speak to demand conditions and firm conduct may be especially helpful.\footnote{The recent literature in fact proceeds along this line: ``The materials demand function in our setting will take as arguments
all state variables of the firm (\ldots), including productivity,
and all additional variables that affect a firm's demand for materials. These
include firm location (\ldots), output prices (\ldots), product dummies (\ldots), market
shares (\ldots), input prices (\ldots), the export status of a firm (\ldots), and the
input (\ldots) and output tariffs (\ldots) that the firm faces on the product it
produces'' \cite[p. 466]{DELO:15}. Whether adding covariates restores invertibility can be assessed using the tests in Section \ref{sec:invertibility}.}

Adding covariates may make the assumption $\expt\left[\varepsilon_{it}|x_{it}\right]=0$ that underpins step 1 of the OP/LP/ACF procedure more demanding. We advocate replacing $x_{it}=\left(k_{it},v_{it},\ldots\right)$ in the original procedure by $x_{it}=\left(k_{it+1},k_{it},v_{it},\ldots\right)$. In the original procedure, the assumption $\expt\left[\varepsilon_{it}|x_{it}\right]=0$ is justified by stipulating that $\varepsilon_{it}$ is outside the firm's information set in period $t$. This justification extends to our approach as the firm is assumed to choose $k_{it+1}$ in period $t$. However, adding leads of other covariates to $x_{it}$ may be more challenging if the firm eventually learns $\varepsilon_{it}$. To guard against this possibility, we recommend adding current values or lags of other covariates to $x_{it}$.

Finally, we caution that adding covariates introduces a curse of dimensionality. The resulting larger mean squared error of the estimate of $\expt[q_{it-1}| x_{it-1}]$ may counterbalance the benefit of the smaller $\var(\zeta_{it-1})$ in finite samples. In the next section, we propose a modification of the OP/LP/ACF procedure that addresses this problem.

\paragraph{Summary.}

Our analysis calls for rethinking how the OP/LP/ACF procedure is implemented. Theorem \ref{thm:main} makes a strong case for ensuring $x_{it-1}\supseteq z_{it}$. For a linear law of motion, the moment condition in step 2 of the OP/LP/ACF procedure holds for the true production function $f^0$ and the true law of motion $g^0$. For a nonlinear law of motion, recovering the true production function $f^0$ requires condition \eqref{eq:iff_proxy_variable} to be satisfied. A violation of condition \eqref{eq:iff_proxy_variable} results in biased estimates. In this case, Theorem \ref{thm:bias_reduction} shows that setting $x_{it-1}\supseteq z_{it}$ provides a first-order bias correction.

Theorem \ref{thm:main} further suggests to be flexible in modeling the law of motion in step 2 as condition \eqref{eq:iff_proxy_variable} is easier to satisfy if $\tilde g$ can come from a larger set $\mathcal{G}$. Theorem \ref{thm:bound} finally suggests to take a kitchen sink approach to step 1.

\paragraph{Related literature.}

An alternative to relaxing the invertibility assumption is to forgo step 1 of the OP/LP/ACF procedure. Instead of substituting $\expt\left[q_{it-1}|x_{it-1}\right]$ for $q^*_{it-1}$ in moment condition \eqref{eq:true_step2}, the dynamic panel approach to production function estimation pioneered by \citeasnoun{BLUN:00} substitutes $q_{it-1}$ and imposes the linear law of motion $g(\omega_{it-1}) = \rho \omega_{it-1}$ to obtain
\begin{equation*}
	\expt[q_{it} - f(k_{it}, v_{it}) - g(q_{it-1} - f(k_{it-1}, v_{it-1}))|z_{it}] =	\expt[q_{it} - f(k_{it}, v_{it}) - g(q^*_{it-1} - f(k_{it-1}, v_{it-1})) - \rho \varepsilon_{it-1}|z_{it}].
\end{equation*}
Besides the assumption $\expt\left[\xi_{it}+\varepsilon_{it}|z_{it}\right]=0$ maintained in moment condition \eqref{eq:true_step2}, the dynamic panel approach requires assuming $\expt[\varepsilon_{it-1}|z_{it}] = 0$. In comparison, our Example \ref{ex:linear} requires setting $x_{it-1}\supseteq z_{it}$ and assuming $\expt[\varepsilon_{it-1}|x_{it-1}] = 0$ in step 1 of the OP/LP/ACF procedure. Either assumption can be justified by appealing to the firm's information set and the timing of decisions, as discussed above.

\citeasnoun{HU:20}, \citeasnoun{BRAN:20}, and \citeasnoun{POND:21} generalize the dynamic panel approach from a linear to a nonlinear law of motion.\footnote{The estimation strategy in \citeasnoun{HU:20} and \citeasnoun{BRAN:20} differs from their identification strategies. The latter builds on \citeasnoun{HU:08} and relies on the existence of two conditionally independent proxies for productivity and the invertibility of an integral operator. The implied semiparametric maximum likelihood estimator involves significant computational challenges that preclude its implementation.}  To illustrate the similarities and differences with our approach, consider the quadratic law of motion $g(\omega_{it-1}) = \rho_1 \omega_{it-1} + \rho_2 \omega^2_{it-1}$. Substituting $q_{it-1}$ for $q^*_{it-1}$ in moment condition \eqref{eq:true_step2} yields
\begin{multline*}
	\expt[q_{it} - f(k_{it}, v_{it}) - g(q_{it-1}  - f(k_{it-1}, v_{it-1})) |z_{it}]
=\expt[q_{it} - f(k_{it}, v_{it}) \\ - g(q^*_{it-1}  - f(k_{it-1}, v_{it-1}))
- \rho_1\varepsilon_{it-1} - 2\rho_2 \varepsilon_{it-1} (q^*_{it-1}  - f(k_{it-1}, v_{it-1})) -\rho_2\varepsilon_{it-1}^2|z_{it}].
\end{multline*}
The generalized dynamic panel approach therefore requires assuming $\expt[\varepsilon_{it-1}|z_{it}] = 0$, $\expt[\varepsilon_{it-1}\omega_{it-1}| z_{it}] = 0$, and $\var\left(\varepsilon_{it-1}|z_{it}\right) = \sigma^2$. In comparison, our Example \ref{ex:quadratic} requires setting $x_{it-1}\supseteq z_{it}$ and assuming $\expt[\varepsilon_{it-1}|x_{it-1}] = 0$ and $\var(\zeta_{it-1}|z_{it})=\sigma^2$. Both approaches are similar in that they impose exclusion restrictions on second-order moments of latent variables to ensure that the moment condition holds for the true production function.

This similarity extends to the polynomial law of motion $g(\omega_{it-1}) = \sum_{j=1}^\infty \rho_{j}\omega^j_{it-1}$. In this case, the generalized dynamic panel approach requires assuming that $\expt[\varepsilon_{it-1}^j|\omega_{it-1}, z_{it}]$ is constant for all $j$. This restriction on the joint distribution of $\varepsilon_{it-1}$ and $\omega_{it-1}$ may be unpalatable if $\varepsilon_{it-1}$ and $\omega_{it-1}$ are interpreted as the untransmitted, respectively, transmitted component of productivity (see footnote \ref{fn:component}).\footnote{If $x_{it-1}\subseteq z_{it}$ and invertibility holds, then $\expt[\varepsilon_{it-1}^j|\omega_{it-1}, z_{it}]=\expt[\varepsilon_{it-1}^j|z_{it}]$. While this facilitates interpreting the restriction, we focus on the case where invertibility fails.} In comparison, our Example \ref{ex:analytic} requires assuming that $\expt[\zeta_{it-1}^j | x_{it-1}]$ is constant for all $j$.

The key difference is that our approach imposes restrictions on the lagged prediction error $\zeta_{it-1}$ rather than the lagged disturbance $\varepsilon_{it-1}$. Whereas the magnitude and properties of $\varepsilon_{it-1}$ are inherent in the data generating process, the advantage of our approach is that the magnitude of $\zeta_{it-1}$ can be reduced by adding covariates to the regression in step 1 of the OP/LP/ACF procedure. Proceeding to add covariates may shrink $\var(\zeta_{it-1}) = \expt[ (\omega_{it-1} - \expt[\omega_{it-1}|x_{it-1}])^2]$ towards zero if we approach invertibility in the limit. This makes our approach particularly attractive for datasets with a rich set of observables.

\section{Modification of OP/LP/ACF moment condition\label{sec:modification_orthogonalization}}

Step 1 of the OP/LP/ACF procedure estimates the conditional expectation $\expt[q_{it-1}|x_{it-1}]$. To understand the impact that a noisy estimate in step 1 has on the GMM estimator in step 2, we use $\hat{e}(x_{it-1})$ to denote the estimate of a nuisance parameter $e(x_{it-1})$ with true value $\expt[q_{it-1}|x_{it-1}]$. We remain agnostic about the estimation method used in step 1. 

Replacing $\expt[q_{it-1}|x_{it-1}]$ by $\hat{e}(x_{it-1})$ in moment condition \eqref{eq:step2} that underpins step 2 yields
\begin{equation}\label{eq:step2_feasible}
\expt\left[q_{it}-f(k_{it},v_{it})-g\left(\hat{e}(x_{it-1})-f(k_{it-1},v_{it-1})\right)|z_{it}\right] = 0. 
\end{equation}
Because $\hat{e}(x_{it-1})$ converges to $\expt[q_{it-1}|x_{it-1}]$ as the sample size increases, we write $\hat{e}(x_{it-1}) = \expt[q_{it-1}|x_{it-1}] + \lambda \delta(x_{it-1})$, where $\lambda$ converges to zero as sample size increases and $\delta(x_{it-1})$ is the direction of the finite-sample deviation of $\hat{e}(x_{it-1})$ from $\expt[q_{it-1}|x_{it-1}]$. The pathwise (or Gateaux) derivative of the left-hand side of moment condition \eqref{eq:step2_feasible} with respect to the direction $\delta(x_{it-1})$ at $\hat{e}(x_{it-1})=\expt[q_{it-1}|x_{it-1}]$ is
\begin{eqnarray*}
&&\pd{}{\lambda} \expt\left[q_{it}-f(k_{it},v_{it}) - g\left( \expt[q_{it-1}|x_{it-1}] + \lambda \delta(x_{it-1}) -f(k_{it-1},v_{it-1}) \right) \big| z_{it}\right] \Big |_{\lambda = 0}\\
&=& -\expt\left[g'\left( \expt[q_{it-1}|x_{it-1}] -f(k_{it-1},v_{it-1}) \right) \delta(x_{it-1}) \big| z_{it} \right].
\end{eqnarray*}
Because this derivative is generally non-zero, moment condition \eqref{eq:step2} is impacted by the deviation of $\hat{e}(x_{it-1})$ from $\expt[q_{it-1}|x_{it-1}]$ that arises from estimation noise in step 1. 

To eliminate this impact and its adverse consequences for the asymptotic distribution of the GMM estimator, we modify the moment condition in step 2 as follows:
\begin{multline}\label{eq:modified_step2}
	\expt\left[q_{it}-f(k_{it},v_{it}) - g\left(\hat{e}(x_{it-1}) -f(k_{it-1},v_{it-1}) \right) \right.\\
	- \left. g^\prime\left(\hat{e}(x_{it-1}) -f(k_{it-1},v_{it-1}) \right) (q_{it-1} - \hat{e}(x_{it-1})) \big| z_{it}\right] = 0.
\end{multline}
Our modification explicitly incorporates a feasible version of the first-order bias correction that is implicit in moment condition \eqref{eq:step2} if $x_{it-1} \supseteq z_{it}$ (see Theorem \ref{thm:bias_reduction}).

If $\hat{e}(x_{it-1})=\expt[q_{it-1}|x_{it-1}]$ and $x_{it-1} \supseteq z_{it}$, then we have $\expt[q_{it-1} - \hat{e}(x_{it-1})| z_{it}] = 0$ so that moment condition \eqref{eq:modified_step2} coincides with moment condition \eqref{eq:step2}. Our modification therefore retains the identification power of the original OP/LP/ACF procedure when $x_{it-1} \supseteq z_{it}$. 

However, unlike the original moment condition \eqref{eq:step2}, the modified moment condition \eqref{eq:modified_step2} is not sensitive to small deviations of $\hat e(x_{it-1})$ from $\expt[q_{it-1}|x_{it-1}]$, a property known as \emph{Neyman orthogonality} \cite{NEYM:59}:
\begin{theorem}[Neyman orthogonality]\label{thm:neyman_orth}
	Define the set of functions $\tilde{\mathcal{T}} =\{ \delta(x_{it-1})=e(x_{it-1}) - \expt[q_{it-1}| x_{it-1}]: e \in \mathcal{T} \}$, where $\mathcal{T}$ is the set of real-valued integrable functions of $x_{it-1}$. If $x_{it-1}\supseteq z_{it}$ and $g$ has a bounded and continuous derivative, then the pathwise (or Gateaux) derivative of the left-hand side of moment condition \eqref{eq:modified_step2} with respect to any direction $\delta\in\tilde{\mathcal{T}}$ at $\hat{e}(x_{it-1}) =\expt[q_{it-1}| x_{it-1}]$ is zero:
	\begin{multline*}
		\pd{}{\lambda} \expt\left[q_{it}-f(k_{it},v_{it}) - g\left( \expt[q_{it-1}|x_{it-1}] + \lambda \delta(x_{it-1}) -f(k_{it-1},v_{it-1}) \right) \right.\\
		- \left. g^\prime\left(\expt[q_{it-1}|x_{it-1}] + \lambda \delta(x_{it-1}) -f(k_{it-1},v_{it-1}) \right)(q_{it-1} - \expt[q_{it-1}|x_{it-1}] - \lambda \delta(x_{it-1})) \big| z_{it}\right] \Big |_{\lambda = 0} =  0.
	\end{multline*}
\end{theorem}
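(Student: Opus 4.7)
The plan is a direct calculation of the pathwise derivative, exploiting a product-rule cancellation that is the whole point of the bias-correction term, followed by an application of the law of iterated expectations made possible by $x_{it-1}\supseteq z_{it}$. First I would pass the derivative $\partial/\partial\lambda$ inside the outer conditional expectation (justified by a standard dominated-convergence argument using boundedness of $g'$ together with integrability of $q_{it-1}$, $\delta$, and the production function). Let $a(\lambda)=\expt[q_{it-1}|x_{it-1}]+\lambda\delta(x_{it-1})-f(k_{it-1},v_{it-1})$, so $a'(\lambda)=\delta(x_{it-1})$, and let $b(\lambda)=q_{it-1}-\expt[q_{it-1}|x_{it-1}]-\lambda\delta(x_{it-1})$, so $b'(\lambda)=-\delta(x_{it-1})$. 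The integrand equals $q_{it}-f(k_{it},v_{it})-g(a(\lambda))-g'(a(\lambda))\,b(\lambda)$.

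Differentiating in $\lambda$, the first piece contributes $-g'(a)\delta$, while the product rule applied to the second piece gives $-g''(a)\delta\, b - g'(a)(-\delta) = -g''(a)\delta\, b + g'(a)\delta$. The two $g'$ terms cancel exactly, leaving $-g''(a(\lambda))\,\delta(x_{it-1})\,b(\lambda)$. This cancellation is the key structural property that makes the modification Neyman-orthogonal: the added bias-correction term is precisely the piece whose $\lambda$-derivative kills the first-order sensitivity to $e(x_{it-1})$.

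Evaluating at $\lambda=0$, so that $a(0)=\expt[q_{it-1}|x_{it-1}]-f(k_{it-1},v_{it-1})$ and $b(0)=q_{it-1}-\expt[q_{it-1}|x_{it-1}]$, the pathwise derivative reduces to
\begin{equation*}
-\expt\bigl[g''(\expt[q_{it-1}|x_{it-1}]-f(k_{it-1},v_{it-1}))\,\delta(x_{it-1})\,(q_{it-1}-\expt[q_{it-1}|x_{it-1}])\,\bigm|\,z_{it}\bigr].
\end{equation*}
Because $x_{it-1}\supseteq z_{it}$, I would then apply the law of iterated expectations by conditioning on $x_{it-1}$. The factors $g''(\expt[q_{it-1}|x_{it-1}]-f(k_{it-1},v_{it-1}))$, $\delta(x_{it-1})$, and $f(k_{it-1},v_{it-1})$ are all $x_{it-1}$-measurable and therefore come out of the inner conditional expectation, leaving $\expt[q_{it-1}-\expt[q_{it-1}|x_{it-1}]\,|\,x_{it-1}]=0$ by the defining property of the conditional expectation. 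The whole expression then vanishes.

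I do not expect a serious obstacle; the result is essentially a two-line verification once the product rule is organized correctly. The only delicate points are (i) invoking dominated convergence to move the derivative inside the expectation, which is where the bounded-derivative hypothesis on $g$ (and enough smoothness to make sense of the second-derivative term) is used, and (ii) noting that $x_{it-1}\supseteq z_{it}$ is exactly what lets the iterated expectation collapse the conditional mean of $q_{it-1}-\expt[q_{it-1}|x_{it-1}]$ to zero; without that inclusion, the argument fails, which explains why Theorem~\ref{thm:neyman_orth} echoes the same implementation requirement as Theorem~\ref{thm:bias_reduction}.
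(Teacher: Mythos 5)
Your calculation is structurally right about where the cancellation happens, but as written it does not prove the theorem under its stated hypotheses: the product-rule step that differentiates $g'(a(\lambda))\,b(\lambda)$ in $\lambda$ produces a $g''$ term, and the theorem only assumes that $g$ has a bounded and continuous \emph{first} derivative. You acknowledge needing ``enough smoothness to make sense of the second-derivative term,'' but that smoothness is exactly what is not granted; for a law of motion that is $C^1$ but not $C^2$, the pointwise derivative of your integrand does not exist and the argument fails at that step. A secondary issue: even granting $g''$, the dominated-convergence interchange you invoke would need a dominating function for $g''(a(\lambda))\,\delta(x_{it-1})\,b(\lambda)$, i.e.\ integrability of $\vert g''\vert\,\vert\delta\vert\,\vert q_{it-1}-\expt[q_{it-1}|x_{it-1}]\vert$, which does not follow from boundedness of $g'$ and integrability of $\delta$.

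The paper's proof avoids the second derivative entirely by reorganizing \emph{before} taking any limit. For every fixed $\lambda$, the piece $g'\left(\expt[q_{it-1}|x_{it-1}]+\lambda\delta(x_{it-1})-f(k_{it-1},v_{it-1})\right)(q_{it-1}-\expt[q_{it-1}|x_{it-1}])$ already has zero conditional expectation given $z_{it}$, by iterated expectations with $x_{it-1}\supseteq z_{it}$ --- this is your final step, but applied inside the difference quotient at each $\lambda_n$ and at $0$ rather than after differentiating. What then remains of the correction term is $-\lambda_n\,g'(\cdot+\lambda_n\delta)\,\delta$, whose difference quotient is simply $-g'(\cdot+\lambda_n\delta)\,\delta$ with no product rule needed, and the difference quotient of $-g(a(\lambda_n))$ is handled by the mean value theorem using only $g'$. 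Dominated convergence with the envelope $\sup_{|\lambda|<\eta}\vert g'(\cdot+\lambda\delta)\delta\vert$ shows both resulting limits equal $\expt\left[g'(\cdot)\delta(x_{it-1})\,\middle|\,z_{it}\right]$, so they cancel. To salvage your version you would need to add twice continuous differentiability of $g$ plus the requisite integrability to the hypotheses; otherwise, restructure along the paper's lines.
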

\noindent The proof of Theorem \ref{thm:neyman_orth} is in Appendix \ref{app:modification_orthogonalization}.\footnote{The assumption that $g$ has a bounded derivative simplifies stating Theorem \ref{thm:neyman_orth}. As shown in Appendix \ref{app:modification_orthogonalization}, it can be replaced by a much weaker assumption.}$^,$\footnote{We recommend ensuring $x_{it-1}\supseteq z_{it}$ when implementing our modification of the OP/LP/ACF procedure. Without $x_{it-1}\supseteq z_{it}$, the modified moment condition \eqref{eq:modified_step2} entails a first-order bias correction toward $\expt[q_{it-1}|x_{it-1}, z_{it}]$ instead of $\expt[q_{it-1}|x_{it-1}]$ and the expectation of the correction term conditional on $z_{it}$ is generally non-zero even if $\hat{e}(x_{it-1}) = \expt[q_{it-1}|x_{it-1}]$.}

Neyman orthogonality has received renewed attention in the double-debiased machine learning literature. Following the arguments in \citeasnoun{CHER:18} and \citeasnoun{CHER:22}, as long as $\hat{e}(x_{it-1})$ converges to $\expt[q_{it-1}|x_{it-1}]$ at a rate faster than $N^{-1/4}$, the GMM estimator based on moment condition \eqref{eq:modified_step2} is \emph{oracle efficient}. This means that its asymptotic distribution is {\em as if} the true value $\expt[q_{it-1}|x_{it-1}]$ of the nuisance parameter $e(x_{it-1})$ is known. The GMM estimator based on moment condition \eqref{eq:modified_step2} therefore achieves the efficient asymptotic distribution.

Besides efficiency, the modified moment condition \eqref{eq:modified_step2} enjoys further advantages over the original moment condition \eqref{eq:step2}. First, because Theorem \ref{thm:neyman_orth} allows for any integrable deviation, it accommodates a wide range of estimation methods in step 1 of the OP/LP/ACF procedure. This includes traditional nonparametric methods such as sieve, kernel, and LASSO estimators, which produce smooth estimates, as well as modern machine learning techniques such as neural networks and random forests, which may produce non-smooth estimates. Neyman orthogonality means that small deviations of $\hat{e}(x_{it-1})$ from $\expt[q_{it-1}|x_{it-1}]$ that can arise from estimation noise, functional approximation error in sieve estimation, local smoothing bias in kernel estimation, or regularization bias in LASSO and machine learning methods do not have a first-order impact on the GMM estimator in step 2. As long as $\hat{e}(x_{it-1})$ converges to $\expt[q_{it-1}|x_{it-1}]$ at a rate faster than $N^{-1/4}$, its asymptotic distribution is robust to the choice of estimation method and its implementation in step 1.\footnote{Achieving this rate of convergence requires properly implementing a nonparametric estimation method to avoid functional form bias and the validity of smoothness assumptions specific to the selected method. Because Neyman orthogonality is a local property, the first-order bias correction in moment condition \eqref{eq:modified_step2} loses its theoretical justification if $\hat{e}(x_{it-1})$ deviates substantially from $\expt[q_{it-1}|x_{it-1}]$.}

Second, moment condition \eqref{eq:modified_step2} eliminates the need to correct standard errors in step 2 to account for estimation noise in step 1 through techniques such as those in \citeasnoun{HAHN:18}. 

Unlike general approaches that generate Neyman orthogonality using influence functions or score projections as described in \citeasnoun{CHER:22}, moment condition \eqref{eq:modified_step2} is Neyman orthogonal by construction. By avoiding introducing high-dimensional nuisance parameters, our modification remains as straightforward to implement as the original OP/LP/ACF procedure.

Finally, we note that if OLS is used in step 1 of OP/LP/ACF procedure to estimate the conditional expectation $\expt[q_{it-1}|x_{it-1}]$, then explicitly including the first-order bias correction can be redundant in some cases. Converting moment condition \eqref{eq:modified_step2} into an unconditional moment condition yields
\begin{multline*}
 \expt\left[h(z_{it}) \Big(q_{it}-f(k_{it},v_{it}) - g\left(\hat{e}(x_{it-1}) -f(k_{it-1},v_{it-1}) \right) \right.\\
	 - \left. g^\prime\left(\hat{e}(x_{it-1}) -f(k_{it-1},v_{it-1}) \right) (q_{it-1} - \hat{e}(x_{it-1})) \Big)\right] = 0,
\end{multline*}
where $h(z_{it})$ is a vector function of the instruments $z_{it}$. The first-order bias correction is redundant in a finite sample if 
\begin{equation}\label{eq:finite_sample_corrective_term}
 \frac{1}{NT}\sum_{i}\sum_t h(z_{it})g^\prime\left(\hat{e}(x_{it-1}) -f(k_{it-1},v_{it-1}) \right) (q_{it-1} - \hat{e}(x_{it-1}))=0.
\end{equation}
Because $q_{it-1} - \hat{e}(x_{it-1})$ is the residual from the OLS estimation in step 1, equation \eqref{eq:finite_sample_corrective_term} holds if $h(z_{it})g^\prime(\hat{e}(x_{it-1}) -f(k_{it-1},v_{it-1}))$ can be expressed as a linear combination of the regressors used in step 1. This is the case if the law of motion $g$ is linear and the regressors include $h(z_{it})$.
More generally, if $h(z_{it})g^\prime(\hat{e}(x_{it-1}) -f(k_{it-1},v_{it-1}))$ is well approximated by a linear combination of the regressors used in step 1, then the left-hand side of equation \eqref{eq:finite_sample_corrective_term} is almost zero and we expect the GMM estimator in step 2 to attain near-oracle performance. While our Monte Carlo exercise in Section \ref{sec:montecarlo} illustrates that this can happen, we recommend basing the GMM estimator in step 2 on the modified moment condition \eqref{eq:modified_step2}, as this ensures oracle efficiency irrespective of whether equation \eqref{eq:finite_sample_corrective_term} holds or not. Moreover, if estimation methods other than OLS are used in step 1, then equation \eqref{eq:finite_sample_corrective_term} generally does not hold and the modified moment condition \eqref{eq:modified_step2} improves efficiency.

\section{Monte Carlo exercise\label{sec:montecarlo}}

\paragraph{Data generating process.}

We specify the CES production function
\begin{equation*}
f(k_{it},v_{it})=\frac{\nu}{\rho}\ln\left(\alpha\exp(\rho k_{it})+(1-\alpha)\exp(\rho v_{it})\right)
\end{equation*}
and the disturbance $\varepsilon_{it}\sim N\left(0,\sigma^2_\varepsilon\right)$, where $\alpha\in(0,1)$ is a distributional parameter, $\rho\leq 0$ and $\sigma=\frac{1}{1-\rho}$ is the elasticity of substitution, and $\nu>0$ is returns to scale. We further specify the law of motion
\begin{equation*}
g(\omega_{it-1})=\mu_\omega+\rho_\omega\left((1-\alpha_\omega)\omega_{it-1}+\frac{\alpha_\omega}{6}\ln(\ln(1+\exp(6\omega_{it-1})))\right)
\end{equation*}
and the productivity innovation $\xi_{it}\sim N\left(0,\sigma^2_\omega\right)$, where $\rho_\omega\in(0,1)$ and $\alpha_\omega\in[0,1]$. Our particular interest lies in comparing the Gaussian $AR(1)$ process for $\alpha_\omega=0$ with the nonlinear process for $\alpha_\omega=1$. To facilitate this comparison we calibrate $\mu_\omega$, $\rho_\omega$, and $\sigma^2_\omega$ to hold fixed $\expt\left[\omega_{it}\right]$, $\var\left(\omega_{it}\right)$, and $\corr\left(\omega_{it},\omega_{it-1}\right)$. Figure \ref{fig:productivity} illustrates the resulting distribution of productivity and overlays the law of motion for the two processes.

\begin{figure}
\centerline{\includegraphics[scale=0.5,page=1,trim={0 4in 0 0},clip]{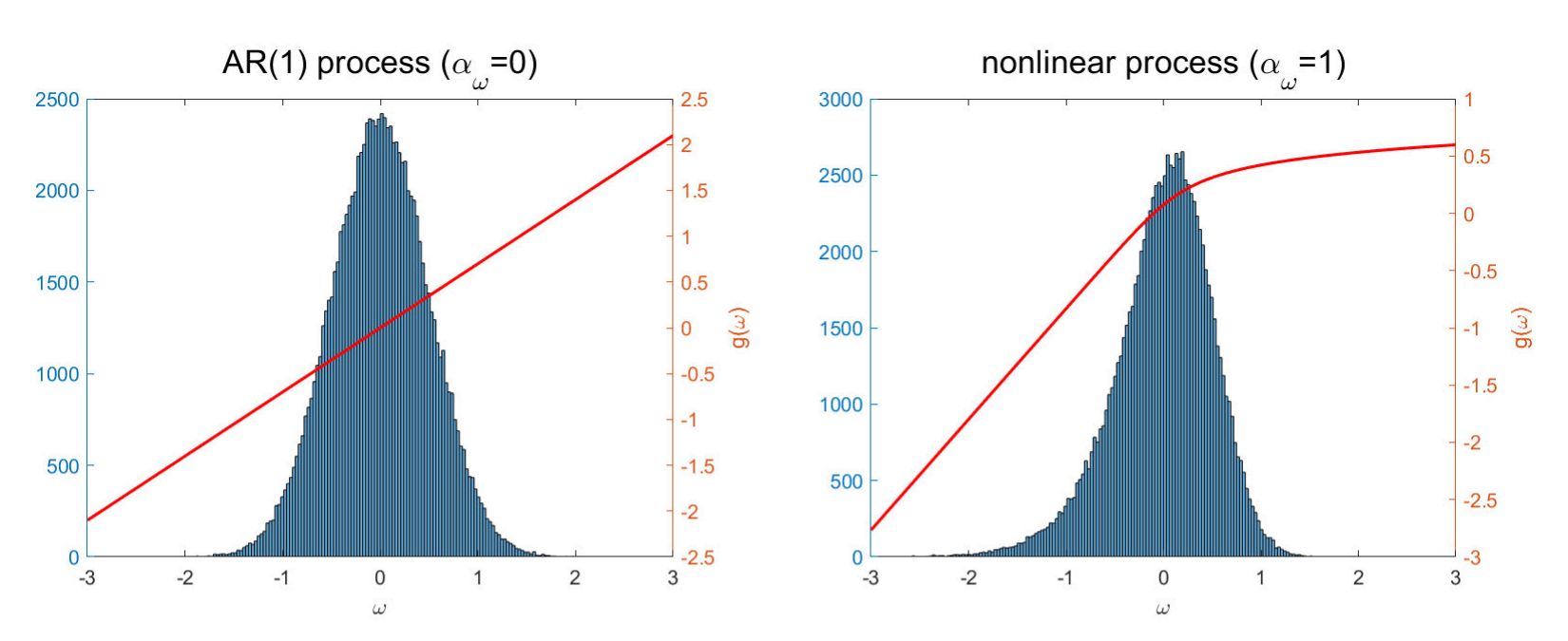}}
\caption{Distribution of productivity (left axis) and law of motion (right axis). Baseline parameterization with Gaussian $AR(1)$ process ($\alpha_\omega=0$, left panel) and nonlinear process ($\alpha_\omega=1$, right panel).\label{fig:productivity}}
\end{figure}

We specify the CES demand
\begin{equation*}
q^*_{it}=\delta_{1i}-(1+\exp(-\delta_{2i}))p_{it},
\end{equation*}
where $p_{it}$ is the output price and $\delta_{i}=(\delta_{1i},\delta_{2i})$ captures shocks to the demand the firm faces and unobserved rivals. This avoids having to specify the imperfectly competitive environment that the firm operates in. Abstracting from time-series variation for simplicity, we specify $\delta_{1i}\sim N\left(\mu_{\delta_1},\sigma^2_{\delta_1}\right)$ and $\delta_{2i}\sim N\left(\mu_{\delta_2},\sigma^2_{\delta_2}\right)$. 

We denote the price of capital as $p^K_{i}$ and the price of the variable input as $p^V_{i}$. Abstracting from time-series variation, we specify $p^K_i\sim N\left(\mu_{p^K},\sigma^2_{p^K}\right)$ and $p^V_i\sim N\left(\mu_{p^V},\sigma^2_{p^V}\right)$. Assuming short-run profit maximization and equating marginal revenue with marginal cost determines $v_{it}$ along with $q^*_{it}$, $q_{it}$, and $p_{it}$. Note that because $v_{it}$ is a function of $k_{it}$, $p^V_{i}$, $\omega_{it}$, and $\delta_{i}$, the firm's demand for the variable input cannot be inverted to express $\omega_{it}$ as a function of observables: invertibility fails as long as $\sigma^2_{\delta_{1}}>0$ or $\sigma^2_{\delta_{2}}>0$.

Recall that capital $k_{it+1}$ is chosen in period $t$. We assume that capital fully depreciates between periods and endow the firm with static expectations regarding $\omega_{it+1}$. Hence, the firm chooses $k_{it+1}$ in period $t$ given $\omega_{it}$, $\delta_{i}$, $p^K_{i}$, and $p^V_{i}$ {\em as if} $k_{it+1}$ and $v_{it+1}$ are variable inputs.\footnote{Assuming constant returns to scale in production and quadratic adjustment costs to capital, the firm's dynamic programming problem that determines investment can be solved in closed form if the firm is a price-taker in the output market \cite{SYVE:01,VANB:07,ACKE:15}. As this is no longer possible if the firm has market power, we opt for a different, tractable specification of the evolution of capital.} 

\begin{table}
\begin{center}
\begin{tabular}{l|ccccccc}
parameter & $\alpha$ &
$\rho$ &
$\nu$ & 
$\expt\left[\omega_{it}\right]$ & 
$\var\left(\omega_{it}\right)$ &
$\corr\left(\omega_{it},\omega_{it-1}\right)$ & \ldots \\
\hline
value & 0.3 & -1 & 0.95 & 0 & $0.5^2$ & 0.7 & \ldots 
\end{tabular}

\bigskip

\begin{tabular}{ccccccccc}
\ldots & $\mu_{\delta_1}$ & 
$\sigma^2_{\delta_1}$ & 
$\mu_{\delta_2}$ & 
$\sigma^2_{\delta_2}$ & 
$\mu_{p^K}$ & 
$\sigma^2_{p^K}$ & 
$\mu_{p^V}$ & 
$\sigma^2_{p^V}$ \\
\hline
\ldots & 10 & $5^2$ & -1.3543 & $0.5^2$ & 0 & $0.5^2$ & 0 & $0.5^2$
\end{tabular}
\end{center}
\caption{Baseline parameterization.\label{tab:baseline}}
\end{table}

Table \ref{tab:baseline} shows our baseline parameterization. The elasticity of substitution is within the range of estimates in the literature \cite{CHIR:08}, as are the 90:10 percentile ratio and persistence of productivity \cite{DELO:21}. Short-run profit maximization implies the markup $\mu_{it}=\frac{P_{it}}{MC_{it}}=1+\exp(\delta_{2i})$, with $\expt\left[\ln\mu_{it}\right]=0.25$ and $\var\left(\ln\mu_{it}\right)=0.0126$. We simulate $S=1000$ datasets with $N=5,000$ firms and $T=20$ periods.\footnote{We use $T_0=5000$ burn-in periods.} We provide further details on the specification in Appendix \ref{app:montecarlo}.

\paragraph{Estimation.}

We estimate $\expt\left[q_{it}|x_{it}\right]$ in step 1 of the OP/LP/ACF procedure by OLS using the complete set of Hermite polynomials of total degree $4$ in the variables in $x_{it}$ as regressors. To illustrate Theorems \ref{thm:main}--\ref{thm:bound}, we vary the specification of $x_{it}$ as detailed below. As an alternative to OLS, we use a neural network estimator with two hidden layers of 128 neurons. We provide further details on the neural network estimator in Appendix \ref{app:montecarlo}.

We estimate the parameters $\theta=\left(\alpha,\rho,\nu,\mu_\omega,\rho_\omega,\alpha_\omega\right)$ in step 2 using the complete set of Hermite polynomials of total degree $4$ in the variables in $z_{it}=\left(k_{it},k_{it-1},v_{it-1},p^V_{i}\right)$ as instruments.\footnote{We treat the price of the variable input $p^V_{i}$ as observable to focus on demand shocks as the reason for the failure of invertibility. The nonidentification result in \citeasnoun{GAND:20} assumes invertibility and therefore does not apply in our setting.} To illustrate Theorem \ref{thm:neyman_orth} and the advantages of Neyman orthogonality in finite samples, we alternatively base the GMM estimator on moment conditions \eqref{eq:step2} and \eqref{eq:modified_step2}. We provide further details on the GMM estimator in Appendix \ref{app:montecarlo}.

We use three statistics to summarize the results. First, we conduct a Lagrange multiplier test for the true parameter value $\theta^0$. Our test corrects for the plug-in nature of the OP/LP/ACF procedure and clustering in the data. We provide further details on the Lagrange multiplier test in Appendix \ref{app:montecarlo}.

Second, following the production function approach to markup estimation, we use the estimate of $\theta_f$ to estimate the markup $\mu_{it}=\frac{P_{it}}{MC_{it}}$ of firm $i$ in period $t$ as
\begin{equation*}
\ln\mu_{it}+\varepsilon_{it}=p_{it}+q_{it}-p^V_{i}-v_{it}+\ln\frac{\partial f(k_{it},v_{it})}{\partial v_{it}}.
\end{equation*}
The right-hand side is the log of the output elasticity minus the log of the expenditure share of the variable input.\footnote{\citeasnoun{DELO:12} isolate
$\ln\mu_{it}$ on the left-hand side and use the residual from the regression in step 1 of the OP/LP/ACF procedure to estimate $\varepsilon_{it}$ on the right-hand side. However, using equation \eqref{eq:step1new}, the residual is $q_{it}-\expt\left[q_{it}|x_{it}\right]=\varepsilon_{it}+\zeta_{it}$. Having $\varepsilon_{it}$ on the left-hand side thus avoids that $\zeta_{it}$ taints the correlation of the estimated markup with a variable of interest such as the firm's export status or a measure of trade liberalization \cite{DORA:21}.} Noting that the disturbance $\varepsilon_{it}$ averages out as $\expt[\varepsilon_{it}]=0$, we refer to the average of $\ln\mu_{it}+\varepsilon_{it}$ across firms and time simply as the average log markup.

Third, because the law of motion is of primary interest in some applications \cite{AW:08,DELO:13,DORA:07}, we use the estimate of $\theta_g$ to estimate
\begin{equation*}
g^\prime(0)=\rho_\omega\left((1-\alpha_\omega)+\alpha_\omega\frac{1}{2\ln 2}\right)
\end{equation*}
as a measure of persistence in the productivity process.

\paragraph{Results.}

We compare three cases for the observables $x_{it}$ in step 1 of the OP/LP/ACF procedure:
\begin{itemize}
\item {\em Case 1:} $x_{it}=\left(k_{it},v_{it},p^V_{i}\right)$;
\item {\em Case 2:} $x_{it}=\left(k_{it+1},k_{it},v_{it},p^V_{i}\right)$;
\item {\em Case 3:} $x_{it}=\left(k_{it+1},k_{it},v_{it},p_{it},p^V_{i}\right)$.
\end{itemize}
Case 1 is our baseline and corresponds to how the OP/LP/ACF procedure is implemented in the existing literature. In light of Theorems \ref{thm:main} and \ref{thm:bias_reduction}, case 2 ensures $x_{it-1}\supseteq z_{it}$. Case 3 heeds Theorem \ref{thm:bound} and pursues a kitchen sink approach by adding the output price as a covariate that speaks to demand conditions.

\begin{sidewaysfigure}
\centerline{\includegraphics[scale=0.8,page=4]{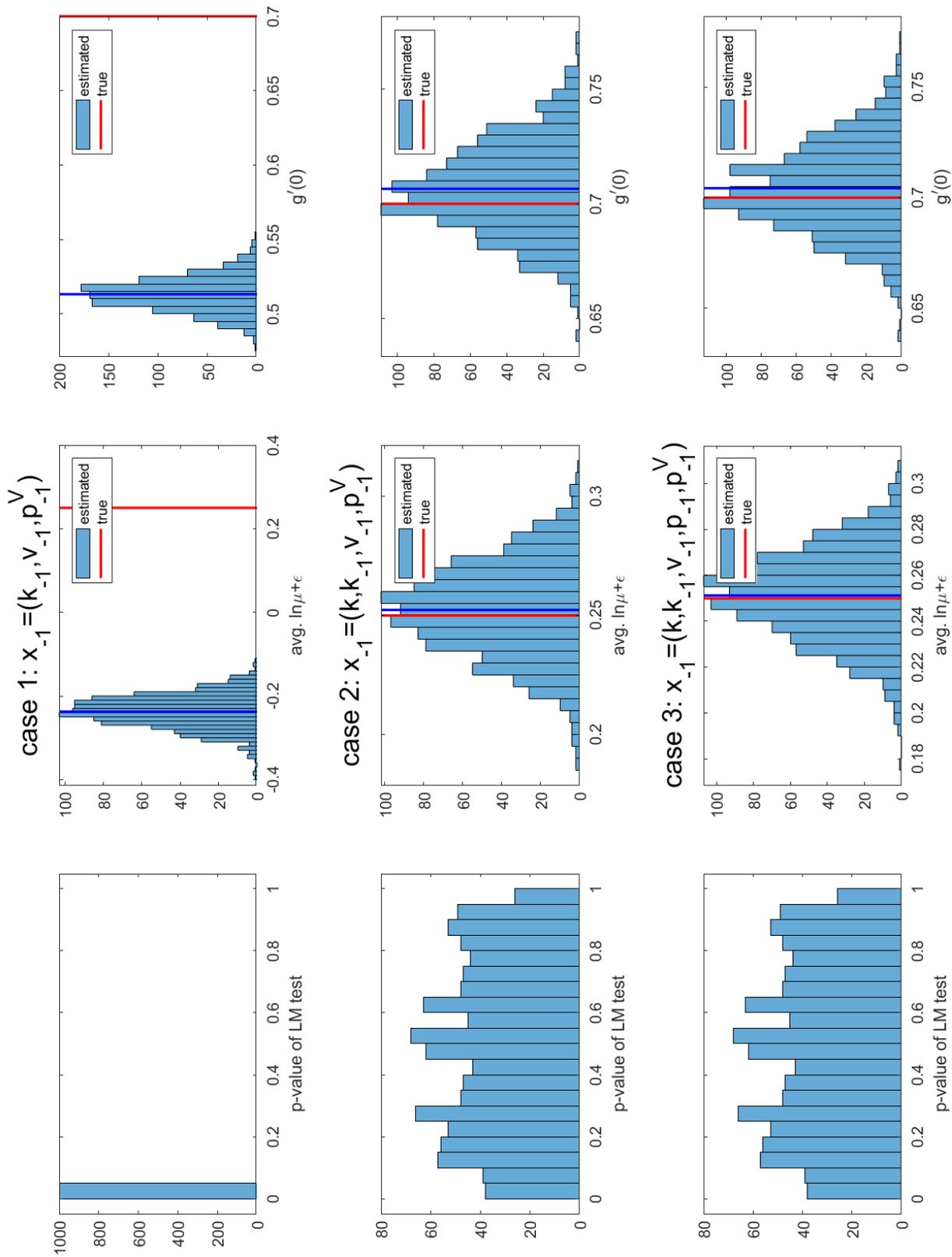}}
\caption{Cases 1, 2, and 3 and moment condition \eqref{eq:step2}. Baseline parameterization with Gaussian $AR(1)$ process ($\alpha_\omega=0$). Case 1 (upper row), case 2 (middle row), and case 3 (lower row). $p$-value for Lagrange multiplier test (left column), average log markup (middle column), and measure of persistence $g^\prime(0)$ (right column).\label{fig:ar1}}
\end{sidewaysfigure}

Figure \ref{fig:ar1} shows the results of the OP/LP/ACF procedure with moment condition \eqref{eq:step2} for the Gaussian $AR(1)$ process ($\alpha_\omega=0$). In case 1, the failure of invertibility causes a massive bias in the production function and in the average log markup derived from it. Indeed, the average log markup centers around $-0.24$ whereas short-run profit maximization implies $\ln\mu_{it}=\ln\left(1+\exp(\delta_{2i})\right)>0$. The failure of invertibility also causes a large bias in the law of motion as summarized by the measure of persistence $g^\prime(0)$. The Lagrange multiplier test for the true parameter value $\theta^0$ rejects by a wide margin. In short, the failure of invertibility can have a substantial, detrimental impact on the OP/LP/ACF procedure as implemented in the existing literature.

\begin{sidewaystable}
\begin{center}
\begin{tabular}{p{10.5cm}|ccc|ccc}
	&	\multicolumn{3}{c|}{average log markup}					&	\multicolumn{3}{c}{measure of persistence $g^\prime(0)$}					\\	
	&	bias	&	variance	&	MSE	&	bias	&	variance	&	MSE	\\	
\hline														
\uline{panel 1: baseline parameterization with $AR(1)$ process, moment condition \eqref{eq:step2}:}	&		&		&		&		&		&		\\	
case 1	&	-0.4875	&	0.0016	&	0.2392	&	-0.1869	&	0.0001	&	0.0351	\\	
case 2	&	0.0023	&	0.0004	&	0.0004	&	0.0065	&	0.0004	&	0.0005	\\	
case 3	&	0.0012	&	0.0004	&	0.0004	&	0.0042	&	0.0004	&	0.0004	\\	
\hline														
\uline{panel 2: baseline parameterization with nonlinear process, moment condition \eqref{eq:step2}:}	&		&		&		&		&		&		\\	
case 1	&	-0.3970	&	0.0015	&	0.1591	&	-0.1524	&	0.0003	&	0.0235	\\	
case 2	&	0.0511	&	0.0005	&	0.0031	&	0.0052	&	0.0003	&	0.0003	\\	
case 3	&	0.0030	&	0.0004	&	0.0004	&	0.0041	&	0.0002	&	0.0002	\\	
\hline														
\uline{panel 3: baseline parameterization with nonlinear process, moment condition \eqref{eq:modified_step2}:}	&		&		&		&		&		&		\\	
case 2	&	0.0415	&	0.0005	&	0.0022	&	-0.0073	&	0.0003	&	0.0003	\\	
case 3	&	0.0037	&	0.0004	&	0.0004	&	0.0028	&	0.0002	&	0.0002	\\	
\hline														
\uline{panel 4: modified parameterization with nonlinear process, moment condition \eqref{eq:step2}:}	&		&		&		&		&		&		\\	
case 2	&	0.5743	&	0.0012	&	0.3309	&	0.0960	&	0.0001	&	0.0093	\\	
case 3	&	-0.0314	&	0.0033	&	0.0043	&	0.0021	&	0.0000	&	0.0000	\\	
\hline														
\uline{panel 5: modified parameterization with nonlinear process, moment condition \eqref{eq:modified_step2}:}	&		&		&		&		&		&		\\	
case 2	&	-0.1327	&	0.0260	&	0.0436	&	-0.0098	&	0.0000	&	0.0001	\\	
case 3	&	-0.0031	&	0.0023	&	0.0023	&	-0.0012	&	0.0000	&	0.0000	\\	
\hline														
\uline{panel 6: baseline parameterization with nonlinear process, neural network estimator, moment condition \eqref{eq:step2}:}	&		&		&		&		&		&		\\	
case 2	&	0.0353	&	0.0202	&	0.0214	&	-0.0209	&	0.0222	&	0.0227	\\	
case 3	&	-0.0196	&	0.0185	&	0.0189	&	-0.0165	&	0.0081	&	0.0084	\\	
\hline														
\uline{panel 7: baseline parameterization with nonlinear process, neural network estimator, moment condition \eqref{eq:modified_step2}:}	&		&		&		&		&		&		\\	
case 2	&	0.0400	&	0.0006	&	0.0022	&	-0.0054	&	0.0003	&	0.0004	\\	
case 3	&	0.0065	&	0.0005	&	0.0005	&	0.0043	&	0.0002	&	0.0002	\\	
\end{tabular}
\end{center}
\caption{Bias, variance, and mean squared error of average log markup and measure of persistence $g^\prime(0)$.\label{tab:summary}}
\end{sidewaystable}

In cases 2 and 3, the Lagrange multiplier test does not reject. As in Example \ref{ex:linear} in Section \ref{sec:validity}, ensuring $x_{it-1}\supseteq z_{it}$ eliminates the bias. The average log markup centers around its true value of $0.25$ and the measure of persistence $g^\prime(0)$ centers around its true value of 0.7. The first panel of Table \ref{tab:summary} summarizes the results. In addition to the bias, it shows the variance and mean squared error of the average log markup and the measure of persistence $g^\prime(0)$.

\begin{sidewaysfigure}
\centerline{\includegraphics[scale=0.8,page=7]{slides.pdf}}
\caption{Cases 1, 2, and 3 and moment condition \eqref{eq:step2}. Baseline parameterization with nonlinear process ($\alpha_\omega=1$). Case 1 (upper row), case 2 (middle row), and case 3 (lower row). $p$-value for Lagrange multiplier test (left column), average log markup (middle column), and measure of persistence $g^\prime(0)$ (right column).\label{fig:nonlinear}}
\end{sidewaysfigure}

Turning to the nonlinear process ($\alpha_\omega=1$), Figure \ref{fig:nonlinear} shows the results of the OP/LP/ACF procedure with moment condition \eqref{eq:step2}. In case 1, the failure of invertibility again causes a massive bias in the average log markup and the measure of persistence $g^\prime(0)$. The Lagrange multiplier test for the true parameter value $\theta^0$ rejects by a wide margin. 

Ensuring $x_{it-1}\supseteq z_{it}$ greatly reduces the bias in cases 2 and 3. In case 2, the average log markup centers around $0.30$ compared to its true value of $0.25$. The measure of persistence $g^\prime(0)$ centers around its true value of $0.7$. The Lagrange multiplier test rejects by a wide margin. Adding the output price as a covariate that speaks to demand conditions further reduces---and here almost eliminates---the bias. In case 3, the average log markup centers around $0.25$ and the measure of persistence $g^\prime(0)$ centers around $0.7$. The Lagrange multiplier test rejects for just 72 of the 1000 simulated datasets at a significance level of 5\%. The second panel of Table \ref{tab:summary} summarizes the results.

As discussed in Section \ref{sec:modification_orthogonalization}, explicitly including the first-order bias correction generally mitigates the adverse impact that a noisy estimate in step 1 has on the GMM estimator in step 2, although it can sometimes be redundant if OLS is used in step 1. The third panel of Table \ref{tab:summary} summarizes the results of replacing moment condition \eqref{eq:step2} by moment condition \eqref{eq:modified_step2} for the nonlinear process ($\alpha_\omega=1$). The results are indeed similar for the two moment conditions.\footnote{To confirm, we disrupt the finite-sample relation in equation \eqref{eq:finite_sample_corrective_term} by running OLS on a separate dataset and using the estimates to construct $\hat{e}(x_{it-1})$ for the focal dataset in step 1. For the original moment condition \eqref{eq:step2}, the mean squared error of the average log markup and the measure of persistence $g^\prime(0)$ nearly doubles in case 3. In case 2, the mean squared error of the average log markup decreases slightly while the mean squared error of the measure of persistence $g^\prime(0)$ increases by an order of magnitude. In contrast, there are no meaningful changes for the modified moment condition \eqref{eq:modified_step2}.} 

However, explicitly including the first-order bias correction is not always redundant even if OLS is used in step 1. To illustrate the advantages of Neyman orthogonality in finite samples, we modify the parameterization to $\expt\left[\omega_{it}\right]=-1.25$, $\var\left(\omega_{it}\right)=2^2$, $\corr\left(\omega_{it},\omega_{it-1}\right)=0.85$, $\sigma^2_{\delta_1}=0.5^2$,
$\mu_{\delta_2}=-2.5425$,
$\sigma^2_{\delta_2}=2^2$, and $\sigma^2_{p^K}=2^2$. This implies an increase in the 90:10 percentile ratio and persistence of productivity and a mean-preserving spread of the log markup ($\expt\left[\ln\mu_{it}\right]=0.25$ and $\var\left(\ln\mu_{it}\right)=0.1986$). 

\begin{sidewaysfigure}
\centerline{\includegraphics[scale=0.8,page=10,trim={0 2.5in 0 0},clip]{slides.pdf}}
\caption{Case 2 and moment conditions \eqref{eq:step2} and \eqref{eq:modified_step2}. Modified parameterization with nonlinear process ($\alpha_\omega=1$). Moment condition \eqref{eq:step2} (upper row) and moment condition \eqref{eq:modified_step2} (lower row). $p$-value for Lagrange multiplier test (left column), average log markup (middle column), and measure of persistence $g^\prime(0)$ (right column).\label{fig:nonlinear_modification_case2}}
\end{sidewaysfigure}

\begin{sidewaysfigure}
\centerline{\includegraphics[scale=0.8,page=11,trim={0 2.5in 0 0},clip]{slides.pdf}}
\caption{OP/LP/ACF procedure. Case 3 and moment conditions \eqref{eq:step2} and \eqref{eq:modified_step2}. Modified parameterization with nonlinear process ($\alpha_\omega=1$). Moment condition \eqref{eq:step2} (upper row) and moment condition \eqref{eq:modified_step2} (lower row). $p$-value for Lagrange multiplier test (left column), average log markup (middle column), and measure of persistence $g^\prime(0)$ (right column).\label{fig:nonlinear_modification_case3}}
\end{sidewaysfigure}

Figures \ref{fig:nonlinear_modification_case2} and \ref{fig:nonlinear_modification_case3} show the results of the OP/LP/ACF procedure with moment conditions \eqref{eq:step2} and \eqref{eq:modified_step2} for the nonlinear process ($\alpha_\omega=1$). 
Figure \ref{fig:nonlinear_modification_case2} pertains to case 2 and Figure \ref{fig:nonlinear_modification_case3} to case 3.
In case 2, there is a large bias with the original moment condition \eqref{eq:step2}. The average log markup centers around 0.82 compared to its true value of 0.25 and the measure of persistence $g^\prime(0)$ centers around 0.79 compared to its true value of 0.7. The bias is reduced with the modified moment condition \eqref{eq:modified_step2}. The average log markup centers around 0.12 and the measure of persistence $g^\prime(0)$ centers around 0.69. In case 3, the bias is again reduced with the modified moment condition \eqref{eq:modified_step2}. With the original moment condition \eqref{eq:step2}, the average log markup centers around 0.22 and the measure of persistence $g^\prime(0)$ centers around 0.70. The Lagrange multiplier tests rejects by a wide margin. With the modified moment condition \eqref{eq:modified_step2}, the average log markup centers around 0.25 and the measure of persistence $g^\prime(0)$ centers around 0.69. The Lagrange multiplier test rejects for just 205 of the 1000 simulated datasets at a 5\% significance level. The fourth and fifth panels of Table \ref{tab:summary} summarize the results.

Finally, we use a neural network estimator in step 1 as an alternative to OLS. Reverting to the baseline parameterization with the nonlinear process ($\alpha_\omega=1$), the sixth and seventh panels of Table \ref{tab:summary} summarize the results. Comparing the sixth to the second panel of Table \ref{tab:summary} shows that using the neural network estimator increases the variance and thereby the mean squared error of the GMM estimator in step 2 by an order of magnitude if we base the GMM estimator on the original moment condition \eqref{eq:step2}. To the best of our knowledge, there is no theory on using a neural network as a plug-in estimator. In contrast, we achieve much better results if we base the GMM estimator on the modified moment condition \eqref{eq:modified_step2}. Comparing the seventh to the third panel shows that using the neural network estimator is comparable to using OLS. As discussed in Section \ref{sec:modification_orthogonalization}, this aligns with the theory on Neyman orthogonality in the double/debiased machine learning literature as our modification ensures that the GMM estimator in step 2 is oracle efficient regardless of the estimation method used in step 1.

\section{Sensitivity analysis\label{sec:sensitivity}}

While setting $x_{it-1}\supseteq z_{it}$ provides a first-order bias correction, higher-order biases may remain and adversely affect the estimates of $\theta=(\theta_f,\theta_g)$ in step 2 of the OP/LP/ACF procedure. In this section, we develop a diagnostic to assess the sensitivity of the estimates to biases arising from the failure of invertibility. Our diagnostic remains valid without $x_{it-1}\supseteq z_{it}$.

To simplify the analysis, we assume that the nuisance parameter $e(x_{it-1})$ equals its true value $\expt[q_{it-1}|x_{it-1}]$ and abstract from estimation noise. We consider
\begin{equation}\label{eq:step2perturbation}
\expt\left[q_{it} -f(k_{it},v_{it})-g\left(q^*_{it-1} - \lambda\zeta_{it-1} -f(k_{it-1},v_{it-1})\right)|z_{it}\right] = 0,
\end{equation}
where $\lambda\geq 0$ scales the lagged prediction error $\zeta_{it-1}$. If $\lambda = 0$, then moment condition \eqref{eq:step2perturbation} equals the valid but infeasible moment condition \eqref{eq:true_step2}. If $\lambda = 1$, then moment condition \eqref{eq:step2perturbation} equals the original moment condition \eqref{eq:step2} because $\zeta_{it-1} = q^*_{it-1} - \expt[q_{it-1}|x_{it-1}]$ from equation \eqref{eq:step1new}. If, moreover,
$x_{it-1}\supseteq z_{it}$, then it also equals the modified moment condition \eqref{eq:modified_step2} with $e(x_{it-1}) = \expt[q_{it-1}|x_{it-1}]$, as discussed in Section \ref{sec:modification_orthogonalization}.

Define
\begin{equation*}
m_{it}(\theta,\lambda)=q_{it}-f(k_{it},v_{it};\theta_f) -g\left(q^*_{it-1} -\lambda\zeta_{it-1}-f(k_{it-1},v_{it-1};\theta_f);\theta_g\right),
\end{equation*}
where we make explicit the parameterization of the production function and law of motion. Moreover, define the pseudo-true value of $\theta$ as a function of $\lambda$ as
\begin{equation*}\label{eq:GMM_parametric_model}
	\theta(\lambda) \in \argmin_{\theta} \expt[h(z_{it}) m_{it}(\theta, \lambda)]^\top W \expt[h(z_{it}) m_{it}(\theta, \lambda)],
\end{equation*}
where the superscript $\top$ denotes the transpose, $h(z_{it})$ is a vector function of the instruments $z_{it}$, and $W$ is the GMM weighting matrix. By construction, $\theta(0)$ equals the true parameter value $\theta^0$ and $\theta(1)$ is the probability limit of the GMM estimator.

Although it is not possible to consistently estimate the bias $\theta(1)-\theta(0)$, it is possible to consistently estimate $\dd{\theta(\lambda)}{\lambda}\big|_{\lambda = 1}$ and thus to assess the sensitivity at the pseudo-true value $\theta(1)$. A small value of $\dd{\theta(\lambda)}{\lambda}\big|_{\lambda = 1}$ provides assurance that small changes in the prediction error that arises in step 1 of the OP/LP/ACF procedure if invertibility fails do not dramatically alter the pseudo-true value $\theta(1)$. Indeed, one can show that $\dd{\theta(\lambda)}{\lambda}\big|_{\lambda = 1}$ is zero in the examples in Section \ref{sec:validity} where no bias arises despite the failure of invertibility. Conversely, a large value of $\dd{\theta(\lambda)}{\lambda}\big|_{\lambda = 1}$ means that the pseudo-true value $\theta(1)$ is sensitive and thus warns of potentially large bias in the estimates.

To derive our diagnostic $\dd{\theta(\lambda)}{\lambda}\big|_{\lambda = 1}$, we show in Appendix \ref{app:sensitivity} that the pseudo-true value $\theta(\lambda)$ under suitable regularity conditions satisfies the first-order condition
\begin{equation}\label{eq:theta_FOC}
	\expt\left[\pd{m_{it}(\theta(\lambda),\lambda)}{\theta} h^\top(z_{it})\right] W  \expt[h(z_{it}) m_{it}(\theta(\lambda), \lambda)] = 0
\end{equation}
where $\pd{m_{it}(\theta(\lambda),\lambda)}{\theta}$ is a $\dim(\theta)\times 1$ vector. The implicit function theorem implies
\begin{equation}\label{eq:change_theta}
\dd{\theta(\lambda)}{\lambda}\Big|_{\lambda = 1}=-\Gamma^{-1} \gamma ,
\end{equation}
where $\Gamma$ is a $\dim(\theta)\times \dim(\theta)$ matrix and $\gamma$ a $\dim(\theta)\times 1$ vector. Using $\pd{m_{it}}{\theta}$, $\pd{m_{it}}{\lambda}$,  $\pd{^2 m_{it}}{\theta\partial \theta^\top}$, and $\pd{^2 m_{it}}{\theta\partial \lambda}$ to abbreviate $\pd{m_{it}(\theta(\lambda), \lambda)}{\theta}|_{\lambda = 1}$, $\pd{m_{it}(\theta(\lambda), \lambda)}{\lambda}|_{\lambda = 1}$, $\pd{^2 m_{it}(\theta(\lambda), \lambda)}{\theta\partial \theta^\top} |_{\lambda = 1}$, and $\pd{^2 m_{it}(\theta(\lambda), \lambda)}{\theta\partial \lambda} |_{\lambda = 1}$, respectively, the matrix $\Gamma$ is defined as
\begin{align}
	\Gamma = & \left[ \expt\left[ \pd{^2 m_{it}}{\theta\partial \theta^\top} e_1 h^\top(z_{it}) \right] W \expt[h(z_{it})m_{it}(\theta, \lambda)],\ldots,  \expt\left[ \pd{^2 m_{it}}{\theta\partial \theta^\top} e_{\dim(\theta)} h^\top(z_{it}) \right] W \expt[h(z_{it})m_{it}(\theta, \lambda)] \right] \nonumber \\
		 & + \expt\left[ \pd{m_{it}}{\theta} h^\top(z_{it}) \right] W \expt\left[ h(z_{it}) \pd{m_{it}}{\theta^\top}   \right], \label{eq:Gamma_definition}
\end{align}
where $e_l$ is a $\dim(\theta) \times 1$ vector with a one in the $l$th position and zeros elsewhere. The vector $\gamma$ is defined as
\begin{equation}
	\gamma = \expt\left[ \pd{^2 m_{it}}{\theta\partial \lambda} h^\top(z_{it}) \right] W \expt[h(z_{it})m_{it}(\theta, \lambda)] + \expt\left[ \pd{m_{it}}{\theta} h^\top(z_{it}) \right] W \expt\left[ h(z_{it}) \pd{m_{it}}{\lambda}   \right]. \label{eq:gamma_definition}
\end{equation}

To evaluate our diagnostic $\dd{\theta(\lambda)}{\lambda}\big|_{\lambda = 1}$, we assume $\expt[\varepsilon_{it-1}|x_{it-1}, z_{it}] = 0$. As we detail in Appendix \ref{app:sensitivity}, consistently estimating $\Gamma$ is straightforward. Consistently estimating $\gamma$ is complicated by the fact that
\begin{equation*}
\frac{\partial m_{it}}{\partial \lambda} = \frac{\partial g(\expt[q_{it-1}|x_{it-1}] - f(k_{it-1}, v_{it-1}; \theta_f); \theta_g)}{\partial \omega_{it-1}} \zeta_{it-1}
\end{equation*}
and
\begin{equation*}
\pd{^2m_{it}}{\theta\partial \lambda} =
\begin{pmatrix}
	\frac{\partial^2 m_{it}}{\partial \theta_f\partial \lambda} \\[7pt]
\frac{\partial^2 m_{it}}{\partial \theta_g \partial \lambda}
\end{pmatrix}
=
\begin{pmatrix}
	- \frac{\partial^2 g(\expt[q_{it-1}|x_{it-1}] - f(k_{it-1}, v_{it-1}; \theta_f); \theta_g)}{\partial \omega^2_{it-1} } \frac{\partial f(k_{it-1}, v_{it-1}; \theta_f)}{\partial \theta_f} \zeta_{it-1} \\[7pt]
  \frac{\partial^2 g(\expt[q_{it-1}|x_{it-1}] - f(k_{it-1}, v_{it-1}; \theta_f); \theta_g)}{\partial \omega_{it-1}\partial \theta_g} \zeta_{it-1}
\end{pmatrix}
\end{equation*}
depend on the lagged prediction error $\zeta_{it-1}$. The key insight is that assuming $\expt[\varepsilon_{it-1}|x_{it-1}, z_{it}] = 0$ implies
\begin{equation*}
\expt[\zeta_{it-1}|x_{it-1}, z_{it}] = \expt[q_{it-1} - \expt[q_{it-1}|x_{it-1}]\, |x_{it-1}, z_{it}].
\end{equation*}
Using the law of iterated expectations, we can therefore substitute $q_{it-1} - \expt[q_{it-1}|x_{it-1}]$ for $\zeta_{it-1}$ in the above expressions and rewrite $\gamma$ as
\begin{equation}\label{eq:gamma_alternative}
\gamma = \expt\left[ \varphi_{it} h^\top(z_{it}) \right] W \expt[h(z_{it})m_{it}(\theta, \lambda)] + \expt\left[ \pd{m_{it}}{\theta} h^\top(z_{it}) \right] W \expt\left[ h(z_{it}) \psi_{it}   \right],
\end{equation}
where $\frac{\partial m_{it}}{\partial \lambda}$ and $\pd{^2m_{it}}{\theta\partial \lambda}$ in \eqref{eq:gamma_definition} are replaced by
\begin{equation*}
\psi_{it} = \frac{\partial g(\expt[q_{it-1}|x_{it-1}] - f(k_{it-1}, v_{it-1}; \theta_f); \theta_g)}{\partial \omega_{it-1}} (q_{it-1} - \expt[q_{it-1}|x_{it-1}])
\end{equation*}
and
\begin{equation*}
\varphi_{it}
=
\begin{pmatrix}
	 -\frac{\partial^2 g(\expt[q_{it-1}|x_{it-1}] - f(k_{it-1}, v_{it-1}; \theta_f); \theta_g)}{\partial \omega^2_{it-1} } \frac{\partial f(k_{it-1}, v_{it-1}; \theta_f)}{\partial \theta_f} (q_{it-1} - \expt[q_{it-1}|x_{it-1}]) \\[7pt]
 \frac{\partial^2 g(\expt[q_{it-1}|x_{it-1}] - f(k_{it-1}, v_{it-1}; \theta_f); \theta_g)}{\partial \omega_{it-1}\partial \theta_g} (q_{it-1} - \expt[q_{it-1}|x_{it-1}]),
\end{pmatrix}.
\end{equation*}
respectively. We can therefore consistently estimate $\gamma$ by the finite-sample analog to equation \eqref{eq:gamma_alternative} using the estimate of $\expt[q_{it-1}|x_{it-1}]$ from step 1 and that of $\theta$ from step 2.

\begin{table}
\begin{center}
\begin{tabular}{l|ccc|ccc}
	&	$\alpha$	&	$\rho$	&	$\nu$	&	$\mu_\omega$	&	$\rho_\omega$	&	$\alpha_\omega$	\\
\hline													
\uline{case 1:}	&		&		&		&		&		&		\\
bias	&	0.1306	&	-0.4932	&	0.0090	&	-0.0717	&	-0.2806	&	-0.2659	\\
diagnostic	&	0.1556	&	-0.0739	&	0.0077	&	-0.0013	&	-0.2118	&	-0.1771	\\
	&	(0.0111)	&	(0.1106)	&	(0.0009)	&	(0.0088)	&	(0.0358)	&	(0.1018)	\\
\hline													
\uline{case 2:}	&		&		&		&		&		&		\\
bias	&	-0.0237	&	0.0432	&	-0.0013	&	-0.0789	&	-0.0759	&	-0.2419	\\
diagnostic	&	-0.0034	&	0.0271	&	-0.0012	&	-0.0033	&	-0.0010	&	-0.0541	\\
	&	(0.0011)	&	(0.0103)	&	(0.0002)	&	(0.0010)	&	(0.0028)	&	(0.0110)	\\
\hline													
\uline{case 3:}	&		&		&		&		&		&		\\
bias	&	-0.0034	&	-0.0126	&	-0.0004	&	-0.0054	&	-0.0028	&	-0.0237	\\
diagnostic	&	-0.0003	&	-0.0004	&	-0.0002	&	-0.0003	&	0.0006	&	-0.0045	\\
	&	(0.0009)	&	(0.0071)	&	(0.0001)	&	(0.0014)	&	(0.0025)	&	(0.0082)	\\
\end{tabular}
\end{center}
\caption{Bias and diagnostic $\dd{\theta(\lambda)}{\lambda}\big|_{\lambda = 1}$. Average with standard deviation in brackets. Baseline parameterization with nonlinear process ($\alpha_\omega=1$). Bias and diagnostic use estimate of $\theta$ based on moment condition \eqref{eq:step2}.\label{tbl:diagnostic}}
\end{table}

Table \ref{tbl:diagnostic} shows our diagnostic $\dd{\theta(\lambda)}{\lambda}\big|_{\lambda = 1}$ for the baseline parameterization with the nonlinear process ($\alpha_\omega=1$). We average the diagnostic over the $S=1000$ datasets from Section \ref{sec:montecarlo}. For comparison, we also show the bias $\theta(1)-\theta(0)$. 

In case 1, the diagnostic warns of large biases in the production function parameter $\alpha$ and in the law of motion parameters $\rho_\omega$ and $\alpha_\omega$. While the diagnostic reliably captures the sign of the bias, it is an order of magnitude smaller than the bias in the production function parameter $\rho$. In case 2, the bias and the diagnostic are both smaller than in case 1, with the exception of the law of motion parameter $\alpha_\omega$. The diagnostic accurately reflects the remaining bias in the production function parameter $\rho$. However, it is an order of magnitude smaller than the bias in $\alpha$, $\rho_\omega$, and $\alpha_\omega$. In case 3, the small biases go hand-in-hand with small values of the diagnostic. Overall, while the diagnostic does not consistently estimate the bias $\theta(1)-\theta(0)$, it usefully signals the sensitivity of the estimates to a failure of invertibility.

Finally, we emphasize that our sensitivity analysis is conducted around the potentially misspecified model at $\lambda=1$ as opposed to the true model at $\lambda=0$ and remains valid regardless of the magnitude of misspecification. Our approach thus differs from the sensitivity analysis in \citeasnoun{ANDR:17} that is only valid locally around the true model.

\section{Concluding remarks\label{sec:conclusion}}

The OP/LP/ACF procedure for production function estimation hinges on an invertibility assumption. We show that this assumption is testable and strongly rejected in widely used panel data. A failure of invertibility has important consequences: the prediction of planned output $q^*_{it}$ from observables $x_{it}$ in step 1 of the OP/LP/ACF procedure contains an error that invalidates capital $k_{it}$ as an instrument, leading to biased estimates in step 2.

Fortunately, much can still be done. We establish a necessary and sufficient condition for the moment condition in step 2 of the OP/LP/ACF procedure to hold for the true production function. This condition compels a rethinking of the OP/LP/ACF procedure: any instrument used in step 2 must be appropriately  included in the regression in step 1 to ensure $x_{it-1} \supseteq z_{it}$. At a minimum, this calls for adding the lead of capital $k_{it+1}$ to $x_{it}$. Our condition further suggests to flexibly model the law of motion in step 2 and to take a kitchen sink approach to the regression in step 1 by adding as many relevant covariates as possible.

In case our condition is violated, we show that setting $x_{it-1} \supseteq z_{it}$ provides a first-order bias correction. Explicitly incorporating a bias correction achieves Neyman orthogonality in the modified moment condition. Neyman orthogonality ensures that the asymptotic distribution of the GMM estimator in step 2 is invariant to estimation noise from step 1 and is particularly advantageous if the regression in the step 1 includes a large number of covariates or if modern machine learning techniques such as neural networks and random forests are used. Monte Carlo simulations confirm that Neyman orthogonality can substantially enhance the performance of the GMM estimator in step 2.

Finally, we recognize that, despite ensuring $x_{it-1}\supseteq z_{it}$, higher-order biases may remain. To gauge their importance for the GMM estimator in step 2, we introduce a diagnostic that measures the sensitivity of the estimates to the size of the prediction error that arises in step 1 if invertibility fails.

In sum, the invertibility assumption is demanding and can fail because unobserved demand heterogeneity or in imperfectly competitive environments with partially or fully unobserved rivals or changes in firm conduct and for other reasons. We provide tools for testing the invertibility assumption and propose straightforward modifications of the OP/LP/ACF procedure that either eliminate or mitigate the bias that arises from a failure of invertibility. To address the challenges associated with including a large number of covariates into the regression in step 1, we provide a modified moment condition that achieves Neyman orthogonality and enhances efficiency and robustness. Finally, we hope our diagnostic proves valuable to researchers seeking to assess the potential biases from a failure of invertibility in their estimates.

\newpage
\appendix
\section*{Appendix}

\section{Tests for invertibility\label{app:invertibility}}

\paragraph{Tests.}

\citeasnoun{DELG:01} and \citeasnoun{CAI:24} develop nonparametric inference procedures for a mean-independence restriction (Proposition \ref{prop:test1}), while \citeasnoun{SHAH:20} focus on a conditional-independence restriction (Proposition \ref{prop:test2}). These procedures presume i.i.d. data. Instead of adapting them for panel data, we proceed semiparametrically in what follows.

We test $\expt[q_{it}|x_{it}, x_{it-1}] = \expt[q_{it}|x_{it}]$ in Proposition \ref{prop:test1} using the partially linear regression model
\begin{equation}\label{eq:reg_invertibility}
q_{it} = x_{it-1}^\top\beta  + \psi(x_{it}) + e_{it},
\end{equation}
where the superscript $\top$ denotes the transpose and $\psi(x_{it})$ is a sufficiently flexible function. In practice, we specify $\psi(x_{it})=r^\top(x_{it})\gamma$, where $r(x_{it})$ is the complete set of polynomials of a given total degree in the variables in $x_{it}$ and $\gamma$ is a parameter vector. $\expt[q_{it}|x_{it}, x_{it-1}] = \expt[q_{it}|x_{it}]$ implies $\beta=0$. Rejecting  $H_{0}: \beta = 0$ is therefore evidence against invertibility. To implement this test, we use OLS to estimate the partially linear model \eqref{eq:reg_invertibility} and conduct a standard $F$-test with firm-level clustering.

Because the partially linear model \eqref{eq:reg_invertibility} involves a large number of regressors, the standard $F$-test may perform poorly in finite samples. To address this concern, we use the procedure in \citeasnoun{CHER:15} (henceforth CHS) instead of OLS. This allows us to use LASSO for dimensionality reduction while achieving oracle efficiency for the estimator of $\beta$ via Neyman orthogonalization.

We test $q_{it}\independent x_{it-1}|x_{it}$ in Proposition \ref{prop:test2} using quantile regression to estimate the partially linear model \eqref{eq:reg_invertibility}. Let $Q_\tau(q_{it}|x_{it})$ and $Q_\tau(q_{it}|x_{it}, x_{it-1})$ denote the $\tau$th quantile of $q_{it}$ conditional on $x_{it}$ and $(x_{it}, x_{it-1})$, respectively. $q_{it}\independent x_{it-1}|x_{it}$ implies $\beta=0$ and is equivalent to $Q_{\tau}(q_{it}|x_{it}) = Q_{\tau}(q_{it}|x_{it}, x_{it-1})$ for all $\tau\in (0, 1)$. We restrict attention to the median ($\tau=0.5$) and conduct the $F$-test with firm-level clustering in \citeasnoun{PARE:16}.

\paragraph{Data.}

We implement our tests on two widely used datasets: an unbalanced panel of Spanish manufacturing firms from 1990 to 2006 (Encuesta Sobre Estrategias Empresariales, henceforth ESEE) and the balanced panel of US manufacturing industries from 1958 to 2018 (NBER-CES) from \citeasnoun{BECK:21}.

We follow the data cleaning procedures and variable definitions in \citename{DORA:09} \citeyear{DORA:09,DORA:21} for the ESEE data and in \citeasnoun{JAUM:22} and \citeasnoun{JAUM:24} for the NBER-CES data. The available variables include output $q_{it}$, capital $k_{it}$, labor $l_{it}$, materials $m_{it}$, the price of output $p_{it}$, the wage $w_{it}$, and the price of materials $p^M_{it}$ (in logs). In addition, the ESEE data has a market dynamism variable $mdy_{it}$ that indicates shifts in the demand the firm faces (slump, stability, or expansion). Investment is non-zero in the NBER-CES data, so that the investment variable $inv_{it}$ (in logs) is well-defined.

\paragraph{Results.}

\begin{table}
\begin{center}
\begin{tabular}{c|cc|ccc|cc|cc}
& & & \multicolumn{3}{c|}{total degree 2} & \multicolumn{2}{c|}{total degree 3} & \multicolumn{2}{c}{total degree 4}\\
industry & \#obs & \#firms &  OLS & CHS & Median & OLS & CHS  & OLS & CHS \\
\hline
1  & 2365 & 296 & 0.000 & 0.000 & 0.000 & 0.000 & 0.000 & 0.000 & 0.000 \\
2  & 1270 & 147 & 0.004 & 0.043 & 0.046 & 0.008 & 0.032  & 0.012 & 0.033  \\
3  & 2168 & 278 & 0.000 & 0.000 & 0.002 & 0.000 & 0.000  & 0.000 & 0.000 \\
4  & 1411 & 165 & 0.000 & 0.004 & 0.005 & 0.016 & 0.006  & 0.035 & 0.004 \\
5  & 1505 & 197 & 0.000 & 0.004 & 0.012 & 0.000 & 0.000  & 0.000 & 0.000 \\
6  & 1206 & 148 & 0.000 & 0.000 & 0.018 & 0.001 & 0.000  & 0.001 & 0.000 \\
7  & 2455 & 304 & 0.000 & 0.000 & 0.000 & 0.000 & 0.000  & 0.000 & 0.000 \\
8  & 2368 & 317 & 0.000 & 0.000 & 0.000 & 0.000 & 0.000  & 0.000 & 0.000 \\
9  & 1445 & 191 & 0.001 & 0.000 & 0.001 & 0.017 & 0.000  & 0.009 & 0.000  \\
10 & 1414 & 174 & 0.000 & 0.000 & 0.000 & 0.000 & 0.000  & 0.000 & 0.000  \\
\end{tabular}
\end{center}
\caption{$p$-value for test of $H_0: \beta = 0$ with $x_{it}= (k_{it}, l_{it}, m_{it}, p_{it}, w_{it}, p^M_{it})$. ESEE data.\label{tbl:pval_invertibility_ESEE}}
\end{table}

\begin{table}
\begin{center}
\begin{tabular}{c|cc|ccc|cc|cc}
& & & \multicolumn{3}{c|}{total degree 2} & \multicolumn{2}{c|}{total degree 3} & \multicolumn{2}{c}{total degree 4}\\
industry & \#obs & \#firms &  OLS & CHS & Median & OLS & CHS  & OLS & CHS \\
\hline
1  & 2365 & 296 & 0.000 & 0.000 & 0.000 & 0.000 & 0.000  & 0.000 & 0.000  \\
2  & 1270 & 147 & 0.000 & 0.001 & 0.110 & 0.001 & 0.000  & 0.070 & 0.001  \\
3  & 2168 & 278 & 0.000 & 0.000 & 0.000 & 0.000 & 0.000  & 0.000 & 0.000  \\
4  & 1411 & 165 & 0.001 & 0.000 & 0.107 & 0.062 & 0.000  & 0.208 & 0.000  \\
5  & 1505 & 197 & 0.000 & 0.000 & 0.004 & 0.000 & 0.000  & 0.000 & 0.000 \\
6  & 1206 & 148 & 0.000 & 0.000 & 0.017 & 0.000 & 0.000  & 0.015 & 0.000  \\
7  & 2455 & 304 & 0.000 & 0.000 & 0.000 & 0.000 & 0.000  & 0.000 & 0.000  \\
8  & 2368 & 317 & 0.000 & 0.000 & 0.000 & 0.000 & 0.000  & 0.000 & 0.000  \\
9  & 1445 & 191 & 0.000 & 0.000 & 0.009 & 0.000 & 0.000  & 0.007 & 0.000  \\
10 & 1414 & 174 & 0.000 & 0.000 & 0.000 & 0.000 & 0.000  & 0.000 & 0.000  \\
\end{tabular}
\end{center}
\caption{$p$-value for test of $H_0: \beta = 0$ with $x_{it}= (k_{it}, l_{it}, m_{it}, p_{it}, w_{it}, p^M_{it},mdy_{it})$. ESEE data.\label{tbl:pval_invertibility_ESEE2}}
\end{table}

Tables \ref{tbl:pval_invertibility_ESEE} and \ref{tbl:pval_invertibility_ESEE2} present results for the 10 industries in the ESEE data. We show the $p$-value for the test of $H_0: \beta = 0$ for different estimators (OLS, CHS, and median regression) and for different specifications of $\psi(x_{it})$ (total degree 2, 3, and 4).\footnote{Results for median regression are limited to total degree 2 because the Stata package \texttt{qreg2} fails to converge for total degree 3 and 4.} In Table \ref{tbl:pval_invertibility_ESEE}, we use $x_{it}=(k_{it}, l_{it}, m_{it}, p_{it}, w_{it}, p^M_{it})$. We reject invertibility by a wide margin in all industries and for all combinations of estimators and specifications of $\psi(x_{it})$. In Table \ref{tbl:pval_invertibility_ESEE2}, we add the market dynamism variable $mdy_{it}$ to $x_{it}$. While we continue to reject invertibility in 8 industries, we cannot reject invertibility at a 5\% significance level in industries 2 and 4 for some combinations of estimators and specifications of $\psi(x_{it})$. This reinforces that the ability to control for demand is a precondition for invertibility.

We note that all estimated models in Tables \ref{tbl:pval_invertibility_ESEE} and \ref{tbl:pval_invertibility_ESEE2} have an $R^2$ of at least $0.990$ and some have an $R^2$ of $0.998$. Insufficient flexibility in modeling $\psi(x_{it})$ is thus unlikely to compromise our results.

Turning to the NBER-CES data, we alternatively use $x_{it}=(k_{it}, l_{it}, m_{it}, p_{it}, w_{it}, p^M_{it})$ and $x_{it}=(k_{it}, l_{it}, m_{it}, p_{it}, w_{it}, p^M_{it},inv_{it})$. The latter follows OP to invert the firm's demand for investment. Throughout the $p$-value for the test of $H_0: \beta = 0$ is $0.000$ and the $R^2$ exceeds $0.987$.

\section{Failure of invertibility and validity of OP/LP/ACF moment condition\label{app:validity}}

\paragraph{Omitted proofs.}

\begin{proof}[Proof of Theorem \ref{thm:bias_reduction}]
	Using $\zeta_{it-1}=q^*_{it-1} - \expt[q_{it-1}|x_{it-1}]$, we have
	\begin{eqnarray}
&& \expt\left[ g^\prime\left(q^*_{it-1} - \zeta_{it-1} -f(k_{it-1},v_{it-1}) \right)\zeta_{it-1} \big| z_{it}\right]\nonumber\\
&=& \expt\left[ g^\prime\left(\expt[q_{it-1}|x_{it-1}] -f(k_{it-1},v_{it-1}) \right)\zeta_{it-1} \big| z_{it}\right]\nonumber\\
&=& \expt\left[  \expt [g^\prime\left(\expt[q_{it-1}|x_{it-1}] -f(k_{it-1},v_{it-1}) \right)\zeta_{it-1} | x_{it-1}] \big| z_{it}\right]\nonumber\\
&=& \expt\left[ g^\prime\left(\expt[q_{it-1}|x_{it-1}] -f(k_{it-1},v_{it-1}) \right) \expt[\zeta_{it-1} | x_{it-1}] \big| z_{it}\right]\nonumber\\
& = & 0, \label{eq:bias_correction_theorem_part_2}
	\end{eqnarray}
where the second equality uses $x_{it-1}\supseteq z_{it}$ and the law of iterated expectations, the third equality uses that $g^\prime\left(\expt[q_{it-1}|x_{it-1}] -f(k_{it-1},v_{it-1}) \right)$ is a function of $x_{it-1}$ because $x_{it-1}=\left(k_{it-1},v_{it-1},\ldots\right)$, and the last equality uses $\expt[\zeta_{it-1} | x_{it-1}] =0$. Finally, using equation \eqref{eq:bias_correction_theorem_part_2} and $\zeta_{it-1}=q^*_{it-1} - \expt[q_{it-1}|x_{it-1}]$ establishes equation \eqref{eq:first_order_correction}.
\end{proof}

\begin{proof}[Proof of Theorem \ref{thm:bound}]
Because $g^0\in \mathcal{G}$, we have
\begin{eqnarray*}
	&&\inf_{\tilde g\in \mathcal{G}}\norm{\expt\left[g^0(\omega_{it-1})|z_{it}\right]-\expt\left[\tilde g\left(\expt\left[\omega_{it-1}|x_{it-1}\right]\right)|z_{it}\right]}_{L, 1} \\
	&=&\inf_{\tilde g\in \mathcal{G}}\expt\left[\Big|\expt\left[g^0(\omega_{it-1})|z_{it}\right]-\expt\left[\tilde g\left(\expt\left[\omega_{it-1}|x_{it-1}\right]\right)|z_{it}\right]\Big|\right] \\
	&\le& \expt\left[\Big\vert \expt[g^0(\omega_{it-1})|z_{it}] - \expt[g^0(\expt[\omega_{it-1}|x_{it-1}])|z_{it}] \Big \vert\right].
\end{eqnarray*}
Using a second-order Taylor expansion of $g^0$ around $\expt[\omega_{it-1}|x_{it-1}]$, we have
\begin{eqnarray*}
&&\Big\vert \expt[g^0(\omega_{it-1})|z_{it}] - \expt[g^0(\expt[\omega_{it-1}|x_{it-1}])|z_{it}] \Big \vert \\
&=& \Big\vert\expt \left[ g^{0\prime}(\expt[\omega_{it-1}|x_{it-1}])\zeta_{it-1}  \Big\vert z_{it} \right] + \expt \left[ g^{0\prime\prime}(\tilde{\omega}_{it-1}) \zeta_{it-1}^2 \Big\vert z_{it} \right] \Big\vert \\
&\le &\Big\vert\expt \left[ g^{0\prime}(\expt[\omega_{it-1}|x_{it-1}])\zeta_{it-1}  \Big\vert z_{it} \right] \Big\vert +
\Big\vert \expt \left[ g^{0\prime\prime}(\tilde{\omega}_{it-1}) \zeta_{it-1}^2 \Big\vert z_{it} \right] \Big\vert
\\
&\le &\Big\vert\expt \left[ g^{0\prime}(\expt[\omega_{it-1}|x_{it-1}])\zeta_{it-1}  \Big\vert z_{it} \right] \Big\vert + \tau \expt \left[ \zeta_{it-1}^2 \Big\vert z_{it} \right],
\end{eqnarray*}
where $\tilde{\omega}_{it-1}$ is some value between $\omega_{it-1}$ and $\expt[\omega_{it-1}|x_{it-1}]$, and the last inequality follows from the definition of $\tau$. As in the proof of Theorem \ref{thm:bias_reduction}, using $x_{it-1}\supseteq z_{it}$ and the law of iterated expectations yields $\expt \left[ g^{0\prime}(\expt[\omega_{it-1}|x_{it-1}]) \zeta_{it-1} \Big\vert z_{it} \right] = 0 $. Thus, we have
\begin{equation*}
	\inf_{\tilde g\in \mathcal{G}}\norm{\expt\left[g^0(\omega_{it-1})|z_{it}\right]-\expt\left[\tilde g\left(\expt\left[\omega_{it-1}|x_{it-1}\right]\right)|z_{it}\right]}_{L, 1} \le\expt\left[ \tau \expt \left[ \zeta_{it-1}^2 | z_{it} \right]\right] = \tau \var(\zeta_{it-1}).
\end{equation*}
\end{proof}

\section{Monte Carlo exercise\label{app:montecarlo}}

\paragraph{Data generating process.}

Marginal revenue is
\begin{equation*}
mr_{it}=-\ln\left(1+\exp(\delta_{2i})\right)+p_{it}
\end{equation*}
and marginal cost is
\begin{equation*}
mc_{it}=p^V_{i}+v_{it}-f(k_{it},v_{it})-\ln\frac{\partial f(k_{it},v_{it})}{\partial v_{it}}-\omega_{it}.
\end{equation*}
We rewrite $mr_{it}=mc_{it}$ as
$$
\frac{\delta_{1i}+\exp(-\delta_{2i})(f(k_{it},v_{it})+\omega_{it})}{1+\exp(-\delta_{2i})}-\ln\left(1+\exp(\delta_{2i})\right)+\ln\frac{\partial f(k_{it},v_{it})}{\partial v_{it}}-p^V_{i}-v_{it}=0
$$
and solve numerically for $v_{it}$. With $v_{it}$ in hand, we determine $q^*_{it}=f(k_{it},v_{it})+\omega_{it}$, $q_{it}=q^*_{it}+\varepsilon_{it}$, and $p_{it}=\frac{\delta_{1i}-q^*_{it}}{1+\exp(-\delta_{2i})}$.

As capital fully depreciates between periods and the firm has static expectations, it chooses $k_{it+1}$ in period $t$ given $\omega_{it}$, $\delta_{i}$, $p^K_{i}$, and $p^V_{i}$ whilst anticipating that it optimally chooses $\tilde v_{it+1}$ in period $t+1$. To maximize its expected profit in period $t+1$, in period $t$ the firm solves
\begin{gather*}
\max_{k_{it+1}}\left(\max_{\tilde v_{it+1}}\exp(\tilde p_{it+1}+\tilde q^*_{it+1})-\exp(p^K_{i}+k_{it+1})-\exp(p^V_{i}+\tilde v_{it+1})\right) \\
=\max_{k_{it+1},\tilde v_{it+1}}\exp(\tilde p_{it+1}+\tilde q^*_{it+1})-\exp(p^K_{i}+k_{it+1})-\exp(p^V_{i}+\tilde v_{it+1})
\end{gather*}
subject to
\begin{gather*}
\tilde q^*_{it+1}=f(k_{it+1},\tilde v_{it+1})+\omega_{it}, \\
\tilde q^*_{it+1}=\delta_{1i}-(1+\exp(-\delta_{2i}))\tilde p_{it+1}.
\end{gather*}
The solution is
\begin{gather*}
k_{it+1}=\frac{1}{1-\rho}\left(\ln\alpha-p^K_{i}\right)
	+\frac{\exp(\delta_{2i})+1}{\exp(\delta_{2i})+1-\nu}\Bigg(
	\ln\nu \notag\\
	+\left(\frac{\nu}{(\exp(\delta_{2i})+1)\rho}-1\right)\ln\left(\alpha\left(\frac{\alpha}{\exp(p^K_{i})}\right)^\frac{\rho}{1-\rho}
	+(1-\alpha)\left(\frac{1-\alpha}{\exp(p^V_{i})}\right)^\frac{\rho}{1-\rho}\right) \notag\\
	-\ln\left(\exp(\delta_{2i})+1\right)+\frac{1}{\exp(\delta_{2i})+1}\omega_{it}+\frac{\delta_{1i}}{1+\exp(-\delta_{2i})}
	\Bigg).
\end{gather*}

\paragraph{Neural network estimator.}

The neural network uses two hidden layers of 128 neurons and ReLU activation. We standardize the input data. We train the network using the stochastic gradient decent (SDG) method on 80\% of firms, setting aside 20\% of firms for validation, to minimize the mean squared error. We stop the SDG iterations if the mean squared error for the validation sample does not improve after 10 epoch iterations. We use the trainnet command in Matlab and set the learning rate to 0.01 and the mini-batch size to 500 observations.

\paragraph{GMM estimator.}

Corresponding to moment condition \eqref{eq:step2} in step 2, define the moment function
\begin{equation}\label{eq:moment_791}
m_{it}(\theta)=q_{it}-f(k_{it},v_{it};\theta_f)-g\left(e(x_{it-1}) - f(k_{it-1},v_{it-1};\theta_f);\theta_g\right),
\end{equation}
where we replace $e(x_{it-1})$ by the estimate from step 1. In step 2, we solve the GMM problem
\begin{equation}\label{eq:GMM_795}
\min_\theta \left(\frac{1}{NT}\sum_{i,t}h(z_{it})m_{it}(\theta)\right)^\top W \left(\frac{1}{NT}\sum_{i,t}h(z_{it})m_{it}(\theta)\right),
\end{equation}
where the superscript $\top$ denotes the transpose, $h(z_{it})$ is the complete set of Hermite polynomials of total degree 4 in the variables in $z_{it}$, and 
\begin{equation*}
	W=\left(\frac{1}{NT-1}\sum_{i,t}\left(h(z_{it})m_{it}\left(\theta^0\right) - \hat{\mu}\right)^\top\left(h(z_{it})m_{it}\left(\theta^0\right) - \hat{\mu}\right)\right)^{-1} \text{ with } \hat{\mu} = \frac{1}{NT}\sum_{i,t}h(z_{it})m_{it}\left(\theta^0\right),
\end{equation*}
is a weighting matrix evaluated at the true parameters values $\theta^0$.

Corresponding to moment condition \eqref{eq:modified_step2}, our modification of the OP/LP/ACF procedure redefines the moment function as
\begin{multline}\label{eq:modified_moment_791}
m_{it}(\theta)=
q_{it}-f(k_{it},v_{it};\theta_f) - g\left(e(x_{it-1}) -f(k_{it-1},v_{it-1};\theta_f);\theta_g \right) \\
	 - g^\prime\left(e(x_{it-1}) -f(k_{it-1},v_{it-1};\theta_f);\theta_g \right) (q_{it-1} - e(x_{it-1})),
\end{multline}
where we replace $e(x_{it-1})$ by the estimate from step 1. We hold fixed $W$ to facilitate the comparison between our modification and the original procedure. 

\paragraph{Lagrange multiplier (LM) test.}

We first derive the LM test for the original moment function \eqref{eq:moment_791}. To account for the plug-in nature of the GMM estimator in step 2, we have to explicitly specify how $e(x_{it-1})$ is modelled and estimated in step 1. In our Monte Carlo exercise, we model $e(x_{it-1}) = r^\top(x_{it-1}) \tau$, where the superscript $\top$ denotes the transpose, $r(x_{it-1})$ is the complete set of Hermite polynomials of total degree 4 in the variables in $x_{it-1}$, and $\tau$ is a parameter vector. Consequently, we define
\begin{equation}\label{eq:moment_816}
m_{it}(\theta, \tau)=q_{it}-f(k_{it},v_{it};\theta_f)-g\left(r^\top(x_{it-1}) \tau - f(k_{it-1},v_{it-1};\theta_f);\theta_g\right),
\end{equation}
where we make explicit the dependence of the moment function on $\tau$. We estimate $\tau$ in step 1 by OLS as $\hat{\tau} = [\frac{1}{NT} \sum_{i,t} r(x_{it-1}) r^\top (x_{it-1})]^{-1} \frac{1}{NT}\sum_{i,t} q_{it-1} r(x_{it-1})$.\footnote{We can alternatively estimate $\tau$ as $\hat{\tau} = [\frac{1}{NT} \sum_{i,t} r(x_{it}) r^\top (x_{it})]^{-1} \frac{1}{NT}\sum_{i,t} q_{it} r(x_{it})$. In our simulated datasets, both current and lagged values are available for all $T$ periods and $N$ firms.}

Following standard asymptotic arguments, we can show that
\begin{equation*}
\widehat{\Sigma}^{-\frac{1}{2}}\frac{1}{\sqrt{NT}} \sum_{i, t} h(z_{it}) m_{it}(\theta^0, \hat{\tau}) \convd N(0, I),
\end{equation*}
where $\widehat{\Sigma} = \widehat{\Lambda} \widehat{\Omega} \widehat{\Lambda}^{\top}$,
\begin{equation*}
	\widehat{\Lambda} = \begin{pmatrix}
		I & \sum_{i,t} h(z_{it})\left[ \frac{\partial m_{it}(\theta^0, \hat{\tau})}{\partial \tau^\top} \right] \left[ \sum_{i,t} r(x_{it-1}) r^\top(x_{it-1})  \right]^{-1}
\end{pmatrix},
\end{equation*}
with $I$ being the $\dim(h(z_{it}))\times \dim(h(z_{it}))$ identity matrix, and
\begin{eqnarray*}
	\widehat{\Omega} &=&
\begin{pmatrix}
	\widehat{\Omega}_{11} & \widehat{\Omega}_{12}\\
	\widehat{\Omega}^\top_{12} & \widehat{\Omega}_{22}
\end{pmatrix}, \\
	\widehat{\Omega}_{11} &=& \frac{1}{NT}\sum_{t=1}^T\sum_{t'=1}^T\sum_{i=1}^N\left[ m_{it}(\theta^0, \hat{\tau}) m_{it'}(\theta^0, \hat{\tau}) h(z_{it}) h^\top(z_{it'}) \right], \\
	\widehat{\Omega}_{12} &=& \frac{1}{NT}\sum_{t=1}^T\sum_{t'=1}^T\sum_{i=1}^N\left[ m_{it}(\theta^0, \hat{\tau}) (q_{it'-1} - r^\top(x_{it'-1}) \hat{\tau}) h(z_{it}) r^\top(x_{it'-1}) \right], \\
	\widehat{\Omega}_{22} &=& \frac{1}{NT}\sum_{t=1}^T\sum_{t'=1}^T\sum_{i=1}^N\left[ (q_{it-1} - r^\top(x_{it-1}) \hat{\tau}) (q_{it'-1} - r^\top(x_{it'-1}) \hat{\tau}) r(x_{it-1}) r^\top(x_{it'-1}) \right].
\end{eqnarray*}
Note that $\widehat{\Sigma}$ incorporates firm-level clustering and thus allows for arbitrary correlation within firm across time in the data.
As $N\to\infty$, the LM statistic
\begin{equation*}
S_{N} = \frac{1}{NT} \left( \sum_{i, t} h(z_{it}) m_{it}(\theta^0, \hat{\tau}) \right)^\top \widehat{\Sigma}^{-1} \left( \sum_{i, t} h(z_{it}) m_{it}(\theta^0, \hat{\tau}) \right)
\end{equation*}
converges to a $\chi^2$ distribution with $\dim(h(z_{it}))$ degrees of freedom. Let $c_{1-\alpha}$ be the $1-\alpha$ quantile of this distribution. Then rejecting $H_0: \expt[h(z_{it})m_{it}(\theta^0, \tau^0)] = 0$ if $S_N > c_{1-\alpha}$ is a valid LM inference procedure with asymptotic size $\alpha$.

Turning to the modified moment function \eqref{eq:modified_moment_791}, Neyman orthogonality renders correcting for the plug-in nature of the GMM estimator in step 2 superfluous. Consequently, we use a standard LM test with firm-level clustering. As $N\to\infty$, the LM statistic
\begin{equation*}
S_{N} = \frac{1}{NT} \left( \sum_{i, t} h(z_{it}) m_{it}(\theta^0, \hat{\tau}) \right)^\top \widehat{\Omega}_{11}^{-1} \left( \sum_{i, t} h(z_{it}) m_{it}(\theta^0, \hat{\tau}) \right)
\end{equation*}
converges to a $\chi^2$ distribution with $\dim(h(z_{it}))$ degrees of freedom. Rejecting $H_0: \expt[h(z_{it})m_{it}(\theta^0, \tau^0)] = 0$ if $S_N > c_{1-\alpha}$ is a valid LM inference procedure with asymptotic size $\alpha$.

\section{Modification of OP/LP/ACF moment condition\label{app:modification_orthogonalization}}

\paragraph{Omitted proofs.}

\begin{proof}[Proof of Theorem \ref{thm:neyman_orth}]
	By the definition of the derivative, it suffices to show that for any non-zero sequence $\lambda_n$ with $\lim_{n\to\infty}\lambda_n = 0$, we have
	\begin{multline}
		\lim_{n\to\infty} \frac{1}{\lambda_n} \Bigg[ \Bigg(\expt\left[q_{it}-f(k_{it},v_{it}) - g\left( \expt[q_{it-1}|x_{it-1}] + \lambda_n \delta(x_{it-1}) -f(k_{it-1},v_{it-1}) \right) \right.\\
		- \left. g^\prime\left(\expt[q_{it-1}|x_{it-1}] + \lambda_n \delta(x_{it-1}) -f(k_{it-1},v_{it-1}) \right)(q_{it-1} - \expt[q_{it-1}|x_{it-1}] - \lambda_n \delta(x_{it-1}) \big| z_{it}\right] \Bigg)\\
- \Bigg(\expt\big[q_{it}-f(k_{it},v_{it}) - g\left( \expt[q_{it-1}|x_{it-1}]  -f(k_{it-1},v_{it-1}) \right) \\
-  g^\prime\left(\expt[q_{it-1}|x_{it-1}]  -f(k_{it-1},v_{it-1}) \right)(q_{it-1} - \expt[q_{it-1}|x_{it-1}]) \big| z_{it}\big] \Bigg) \Bigg] = 0. \label{eq:722}
	\end{multline}

Similar to the idea used in the proof of Theorem \ref{thm:bias_reduction}, the law of iterated expectations implies
    \begin{gather}
	\expt\left[  g^\prime\left(\expt[q_{it-1}|x_{it-1}] + \lambda_n \delta(x_{it-1}) -f(k_{it-1},v_{it-1}) \right)(q_{it-1} - \expt[q_{it-1}|x_{it-1}]) | z_{it} \right] = 0, \label{eq:727} \\
	\expt\left[  g^\prime\left(\expt[q_{it-1}|x_{it-1}] -f(k_{it-1},v_{it-1}) \right)(q_{it-1} - \expt[q_{it-1}|x_{it-1}]) | z_{it} \right] = 0. \label{eq:730}
	\end{gather}
Using equations \eqref{eq:727} and \eqref{eq:730}, equation \eqref{eq:722} is equivalent to
\begin{align}
	  & \lim_{n\to\infty}\frac{1}{\lambda_n}\expt\left[g\left( \expt[q_{it-1}|x_{it-1}] + \lambda_n \delta(x_{it-1}) -f(k_{it-1},v_{it-1}) \right)  - g\left( \expt[q_{it-1}|x_{it-1}]  -f(k_{it-1},v_{it-1}) \right) \Big\vert z_{it} \right] \nonumber\\
	=& \lim_{n\to\infty }\expt \left[ g^\prime\left(\expt[q_{it-1}|x_{it-1}] + \lambda_n \delta(x_{it-1}) -f(k_{it-1},v_{it-1}) \right)\delta(x_{it-1}) \Big\vert z_{it}\right].  \label{eq:749}
\end{align}
It therefore remains to establish equation \eqref{eq:749}. We do this by showing that both sides of equation \eqref{eq:749} are equal to $\expt\left[  g^\prime\left( \expt[q_{it-1}|x_{it-1}]  -f(k_{it-1},v_{it-1}) \right) \delta(x_{it-1}) \bigg\vert z_{it} \right]$.

Because $g^\prime$ is bounded and $\delta$ is integrable,  there exists some $\eta > 0$ such that
\begin{equation}\label{eq:753}
	\expt\left[ \sup_{\lambda: |\lambda| < \eta } \left| g^\prime\left(\expt[q_{it-1}|x_{it-1}] + \lambda \delta(x_{it-1}) -f(k_{it-1},v_{it-1}) \right)\delta(x_{it-1}) \right|   \Bigg\vert z_{it} \right] < \infty
\end{equation}
almost surely.\footnote{Theorem \ref{thm:neyman_orth} continues to hold if the assumption that $g^\prime$ is bounded is relaxed to this condition.}
Because $g^\prime $ is continuous, the dominated convergence theorem therefore implies that
\begin{eqnarray}
&&\lim_{n\to\infty }\expt \left[ g^\prime\left(\expt[q_{it-1}|x_{it-1}] + \lambda_n \delta(x_{it-1}) -f(k_{it-1},v_{it-1}) \right)\delta(x_{it-1}) \Big\vert z_{it}\right] \nonumber \\
&=&\expt \left[ g^\prime\left(\expt[q_{it-1}|x_{it-1}] -f(k_{it-1},v_{it-1}) \right)\delta(x_{it-1}) \Big\vert z_{it}\right]. \label{eq:758}
\end{eqnarray}

Turning to the other side of equation \eqref{eq:749}, the mean value theorem implies that
\begin{align}
&\frac{1}{\lambda_n}\left( g\left( \expt[q_{it-1}|x_{it-1}] + \lambda_n \delta(x_{it-1}) -f(k_{it-1},v_{it-1}) \right)  - g\left( \expt[q_{it-1}|x_{it-1}]  -f(k_{it-1},v_{it-1}) \right)  \right) \nonumber \\
	=& g'\left( \expt[q_{it-1}|x_{it-1}] + \tilde{\lambda}_n \delta(x_{it-1}) -f(k_{it-1},v_{it-1}) \right) \delta(x_{it-1}),
\end{align}
where $\tilde{\lambda}_n$ is some value between $0$ and $\lambda_n$ that implicitly depends on $x_{it-1}$. Because $|\tilde{\lambda}_n| \le |\lambda_n|$ and $g^\prime$ is continuous, we have
\begin{align}
&\lim_{n\to\infty} g'\left( \expt[q_{it-1}|x_{it-1}] + \tilde{\lambda}_n \delta(x_{it-1}) -f(k_{it-1},v_{it-1}) \right) \delta(x_{it-1}) \nonumber \\
	= &\  g'\left( \expt[q_{it-1}|x_{it-1}]  -f(k_{it-1},v_{it-1}) \right) \delta(x_{it-1}) \label{eq:768}
\end{align}
for all $x_{it-1}$. Given inequality \eqref{eq:753} and equation \eqref{eq:768}, the dominated convergence theorem implies that
\begin{align*}
&\lim_{n\to\infty}\frac{1}{\lambda_n}\expt\left[g\left( \expt[q_{it-1}|x_{it-1}] + \lambda_n \delta(x_{it-1}) -f(k_{it-1},v_{it-1}) \right)  - g\left( \expt[q_{it-1}|x_{it-1}]  -f(k_{it-1},v_{it-1}) \right) \Big\vert z_{it} \right] \\
	=&\  \expt\left[  g^\prime\left( \expt[q_{it-1}|x_{it-1}]  -f(k_{it-1},v_{it-1}) \right) \delta(x_{it-1}) \bigg\vert z_{it} \right].
\end{align*}
This establishes equation \eqref{eq:749}.
\end{proof}

\section{Sensitivity analysis\label{app:sensitivity}}

Assume $f(k_{it}, v_{it}; \theta_f)$ is twice continuously differentiable in $\theta_f$ and $g(\omega_{it-1}; \theta_g)$ is twice continuously differentiable in $(\theta_g, \omega_{it-1})$, so that  $m_{it}(\theta, \lambda)$ is twice continuously differentiable in $(\theta, \lambda)$. Assume the product of $h$ and the first-order derivatives of $m_{it}(\theta, \lambda)$ is always integrable. Then the dominated convergence theorem implies
\begin{equation*}
	\pd{\expt[h(z_{it}) m_{it}(\theta, \lambda)]}{\theta^\top} = \expt\left[h(z_{it}) \pd{m_{it}(\theta, \lambda)}{\theta^\top} \right],\quad \pd{\expt[h(z_{it}) m_{it}(\theta, \lambda)]}{\lambda} = \expt\left[h(z_{it}) \pd{m_{it}(\theta, \lambda)}{\lambda} \right].
\end{equation*}
Similarly, assume the product of $h$ and the second-order derivatives of $m_{it}(\theta, \lambda)$ is always integrable. Then the dominated convergence theorem implies
\begin{equation*}
	\pd{^2\expt[h(z_{it}) m_{it}(\theta, \lambda)]}{\theta \partial\theta^\top} = \expt\left[h(z_{it}) \pd{^2m_{it}(\theta, \lambda)}{\theta \partial\theta^\top} \right],\quad	\pd{^2\expt[h(z_{it}) m_{it}(\theta, \lambda)]}{\theta \partial\lambda} = \expt\left[h(z_{it}) \pd{^2m_{it}(\theta, \lambda)}{\theta \partial\lambda} \right].
\end{equation*}
Under these regularity conditions, the pseudo-true value $\theta(\lambda)$ satisfies the first-order condition \eqref{eq:theta_FOC}.

Assume the first-order condition \eqref{eq:theta_FOC} identifies $\theta(\lambda)$ for $\lambda$ in a neighborhood of $1$. Then the implicit function theorem implies
\begin{multline}\label{eq:828}
	\expt\left[ \left( \pd{^2 m_{it}}{\theta\partial \theta^\top} \dd{\theta(\lambda)}{\lambda}\Big|_{\lambda = 1} + \pd{^2 m_{it}}{\theta\partial \lambda} \right)h^\top(z_{it})  \right] W \expt\left[ h(z_{it})m_{it}(\theta(\lambda), \lambda)  \right]\\
	+ \expt\left[ \pd{m_{it}}{\theta} h^\top(z_{it}) \right] W \expt\left[ h(z_{it})  \left( \pd{m_{it}}{\theta^\top} \dd{\theta(\lambda)}{\lambda}\Big|_{\lambda = 1} + \pd{m_{it}}{\lambda} \right) \right] = 0,
\end{multline}
where we use $\pd{m_{it}}{\theta}$, $\pd{m_{it}}{\lambda}$,  $\pd{^2 m_{it}}{\theta\partial \theta^\top}$, and $\pd{^2 m_{it}}{\theta\partial \lambda}$ to abbreviate $\pd{m_{it}(\theta(\lambda), \lambda)}{\theta}|_{\lambda = 1}$, $\pd{m_{it}(\theta(\lambda), \lambda)}{\lambda}|_{\lambda = 1}$, $\pd{^2 m_{it}(\theta(\lambda), \lambda)}{\theta\partial \theta^\top} |_{\lambda = 1}$, and $\pd{^2 m_{it}(\theta(\lambda), \lambda)}{\theta\partial \lambda} |_{\lambda = 1}$, respectively.
Equation \eqref{eq:828} can be rewritten as
\begin{equation*}
\Gamma  \dd{\theta(\lambda)}{\lambda}\Big|_{\lambda = 1} + \gamma = 0,
\end{equation*}
where $\Gamma$ and $\gamma$ are defined in equations \eqref{eq:Gamma_definition} and \eqref{eq:gamma_definition}, respectively. Assume $\Gamma$ is invertible. Then equation \eqref{eq:change_theta} follows.

Consistently estimating $\Gamma$ is straightforward because
\begin{equation*}
\pd{m_{it}}{\theta} =
\begin{pmatrix}
	\frac{\partial m_{it}}{\partial \theta_f} \\[7pt]
\frac{\partial m_{it}}{\partial \theta_g}
\end{pmatrix}
=
\begin{pmatrix}
	- \frac{\partial f(k_{it}, v_{it}; \theta_f)}{\partial \theta_f} + \frac{\partial g(\expt[q_{it-1}|x_{it-1}] - f(k_{it-1}, v_{it-1}; \theta_f); \theta_g)}{\partial \omega_{it-1} } \frac{\partial f(k_{it-1}, v_{it-1}; \theta_f)}{\partial \theta_f} \\[7pt]
 - \frac{\partial g(\expt[q_{it-1}|x_{it-1}] - f(k_{it-1}, v_{it-1}; \theta_f); \theta_g)}{\partial \theta_g}
\end{pmatrix}
\end{equation*}
and
\begin{equation*}
\pd{^2m_{it}}{\theta\partial \theta^{\top}} =
\begin{pmatrix}
	\pd{^2m_{it}}{\theta_f\partial \theta_f^{\top}}  & \pd{^2m_{it}}{\theta_f\partial \theta_g^{\top}} \\
\left(  \pd{^2m_{it}}{\theta_f\partial \theta_g^{\top}}  \right)^{\top} 	 & \pd{^2m_{it}}{\theta_g\partial \theta_g^{\top}}
\end{pmatrix},
\end{equation*}
where
\begin{eqnarray*}
	\pd{^2m_{it}}{\theta_f\partial \theta_f^{\top}} &=&
	- \frac{\partial^2 f(k_{it}, v_{it}; \theta_f)}{\partial \theta_f\partial \theta^\top_f} + \frac{\partial g(\expt[q_{it-1}|x_{it-1}] - f(k_{it-1}, v_{it-1}; \theta_f); \theta_g)}{\partial \omega_{it-1} } \frac{\partial^2 f(k_{it-1}, v_{it-1}; \theta_f)}{\partial \theta_f\partial \theta_f^\top} \\
							&& - \frac{\partial^2 g(\expt[q_{it-1}|x_{it-1}] - f(k_{it-1}, v_{it-1}; \theta_f); \theta_g)}{\partial \omega^2_{it-1} } \left( \frac{\partial f(k_{it-1}, v_{it-1}; \theta_f)}{\partial \theta_f}  \right)  \left( \frac{\partial f(k_{it-1}, v_{it-1}; \theta_f)}{\partial \theta_f}  \right)^\top,\\[7pt]
	\pd{^2m_{it}}{\theta_f\partial \theta_g^{\top}} &=&  \frac{\partial f(k_{it-1}, v_{it-1}; \theta_f)}{\partial \theta_f}  \frac{\partial^2 g(\expt[q_{it-1}|x_{it-1}] - f(k_{it-1}, v_{it-1}; \theta_f); \theta_g)}{\partial \omega_{it-1}\partial \theta_g^\top}, \\[7pt]
	\pd{^2m_{it}}{\theta_g\partial \theta_g^{\top}} &=&   \frac{\partial^2 g(\expt[q_{it-1}|x_{it-1}] - f(k_{it-1}, v_{it-1}; \theta_f); \theta_g)}{\partial \theta_g \partial \theta_g^\top},
\end{eqnarray*}
do not depend on the lagged prediction error $\zeta_{it-1}$. Following standard asymptotic arguments, $\Gamma$ can be consistently estimated by the finite-sample analog to equation \eqref{eq:Gamma_definition} using the estimate of $\expt[q_{it-1}|x_{it-1}]$ from step 1 and the estimate of $\theta$ from step 2.

\newpage
\bibliography{economic}

\end{document}